\newcommand{\hf}{{\textstyle{{\frac{1}{2}}}}}
\newcommand{\bx}{\mbox{\boldmath$x$}}
\newcommand{\btheta}{\mbox{\boldmath$\theta$}}
\newcommand{\bnu}{\mbox{\boldmath$\nu$}}
\newcommand{\ba}{\mbox{\boldmath$a$}}
\newcommand{\bb}{\mbox{\boldmath$b$}}
\newcommand{\bu}{\mbox{\boldmath$u$}}
\newcommand{\bv}{\mbox{\boldmath$v$}}
\newcommand{\bz}{\mbox{\boldmath$z$}}
\newcommand{\bone}{\mbox{\boldmath$1$}}
\newcommand{\iu}{\boldsymbol{\mathrm{i}}}
\newcommand{\bpsi}{\mbox{\boldmath$\psi$}}
\DeclareMathOperator{\R}{\mathbb{R}}
\DeclareMathOperator{\C}{\mathbb{C}}
\DeclareMathOperator{\I}{\mathbbm{1}}
\newtheorem{theorem}{\bf Theorem}[section]
\newtheorem{proposition}{\bf Proposition}[section]
\newtheorem{remark}{\bf Remark}[section]
\newtheorem{assumption}{\bf Assumption}[section]
\definecolor{brilliantrose}{rgb}{1.0, 0.33, 0.64}
\definecolor{myviolet}{rgb}{0.21, 0.0, 0.85}
\definecolor{amethyst}{rgb}{0.6, 0.4, 0.8}
\definecolor{carrotorange}{rgb}{0.93, 0.57, 0.13}
\author{
  Xue Gong
  \thanks{%
           School of Mathematics,
           University of Edinburgh,
           Edinburgh, EH9 3FD, UK; The Maxwell Institute for Mathematical                Sciences, Edinburgh, EH8 9BT, UK
           (\texttt{X.Gong-8@sms.ed.ac.uk})
           }
           \and
        Desmond J. Higham%
        %\texttt{email} \\
            \thanks{%
           School of Mathematics,
           University of Edinburgh,
           Edinburgh, EH9 3FD, UK
           (\texttt{d.j.higham@ed.ac.uk})
           }
           \and
           Konstantinos Zygalakis
            \thanks{%
           School of Mathematics,
           University of Edinburgh,
           Edinburgh, EH9 3FD, UK
           (\texttt{K.Zygalakis@ed.ac.uk})
           }
        }
\date{}
\title{Generative Hypergraph Models and Spectral Embedding} 
\begin{document}
\maketitle
\begin{abstract}
Many complex systems involve interactions between more than two agents. Hypergraphs capture these higher-order interactions through hyperedges that may link more than two nodes. We consider the problem of embedding a hypergraph into low-dimensional Euclidean space so that most interactions are short-range. This embedding is relevant to many follow-on tasks, such as node reordering, clustering, and visualization. We focus on two spectral embedding algorithms customized to hypergraphs which recover linear and periodic structures respectively. In the periodic case, nodes are positioned on the unit circle. We show that the two spectral hypergraph embedding algorithms are associated with a new class of generative hypergraph models. These models generate hyperedges according to node positions in the embedded space and encourage short-range connections. They allow us to quantify the relative presence of periodic and linear structures in the data through maximum likelihood. They also improve the interpretability of node embedding and provide a metric for hyperedge prediction. We demonstrate the hypergraph embedding and follow-on tasks---including quantifying relative strength of structures, clustering and hyperedge prediction---on synthetic and real-world hypergraphs. We find that the hypergraph approach can outperform clustering algorithms that use only dyadic edges. We also compare several triadic edge prediction methods on high school and primary school contact hypergraphs where our algorithm improves upon benchmark methods when the amount of training data is limited.
\end{abstract}

\section{Introduction}
\label{sec:intro}

A typical graph-based data set captures pairwise interactions between 
nodes. There is growing interest in understanding higher-order, group-level, interactions, with different paradigms being proposed \cite{bianconi2021higher, lambiotte2019networks}.
In this work, we represent such interactions with a
hypergraph formulation; here each hyperedge
involves two or more nodes. This framework is discussed in
\cite{benson2016higher,benson2021higher,torres2020why}
 and has found application in real-world problems such as epidemic spread modelling \cite{higham2021epidemics}, image classification \cite{Yu2012image}, and the study of biological networks \cite{Ramadan2004yeast}.

A fundamental learning task on graph-based data is to embed nodes into a low-dimensional Euclidean space \cite{Scholkopf2007learn, Rossi2020}. The learned embedding could be used in follow-on tasks  
such as clustering, classification, and structure recovery. There are various types of learning algorithms for graphs; some design and analyze Laplacian matrices related to the graph \cite{LuxburgUlrike2007Atos}, some solve maximum likelihood problems associated with random graph models \cite{GrindrodPeter2002Rrga, peixoto2022ordered}, and others involve more complex machine learning frameworks \cite{hamiltonGRL, hamilton2017representation}. 

In this work, we build on the use of spectral  methods which derive node embeddings from eigenvectors of a Laplacian matrix \cite{ChungFanR.K.1997Sgt}. 
Such spectral algorithms are popular, since they can be implemented efficiently on large sparse graphs
and they are backed up by 
accompanying consistency theory \cite{dKH22}.
Two main approaches have also recently been proposed for spectral clustering on hypergraphs. 
One approach is to employ higher-order Laplacian tensors \cite{galuppi2021spectral}. Tensors in general contain richer information, however, their use  
can require considerably more computational expense than matrix algorithms, and the results 
can be difficult to visualize and interpret.
A second approach is to ``flatten''
the higher order information into a representative 
node-level matrix. Some matrix-based approaches analyze the vertex-edge incidence matrix associated with a random walk interpretation \cite{Scholkopf2007learn}, other frameworks utilise motif-based Laplacian matrices that could be generalized to various motifs and time steps \cite{Rossi2020, Lucas2020}. 
The methodology that we develop here fits into this second category by building a node-based matrix, using an intermediate step that looks over all 
hyperedge dimensions in order 
to incorporate higher order information.

A second aspect of our work is the connection between spectral methods and random models.
 Many graph embedding \cite{hoff2002latent}, re-ordering \cite{GrindrodPeter2002Rrga}, clustering \cite{Chodrow21}, and structure recovery \cite{cimini2021reconstructing, tudisco2022core} techniques solve maximum-likelihood problems on graphs assuming specific generative models. Besides their application in these inverse problems, random graph models are useful inference tools for quantifying structure, 
  predicting new or missing links, and improving the interpretability of learning algorithms by relating node embeddings to edge probabilities
  \cite{GrindrodPeter2010Pr, GHZ21}. 
 
 Many spectral algorithms are naturally related to optimization problems. This is the case when the Laplacian matrix is Hermitian, so that its eigenvectors are critical points of a quadratic form \cite{LutkepohlHelmut1996Hom}. For example, spectral embedding for undirected graphs using the standard combinatorial Laplacian is related to minimizing the unnormalized cut \cite{LuxburgUlrike2007Atos,spec_hkk}. 
 %Sometimes the objective to be optimized is called incoherence or frustration of the graph, for example the Magnetic Laplacian for directed graph minimizes the frustration defined in \cite{FanuelMichael2017Mefc}. 
 Furthermore, such optimization formulations may lead to  interesting random graph interpretations of spectral algorithms. When the quadratic form can be expressed as the log-likelihood of the graph under a suitable model, the optimization problem may be restated as a node reordering. Such connections have been investigated for undirected graphs \cite{HighamDesmondJ2003Uswn} and directed graphs \cite{GHZ21}, 
 and here we extend these ideas to the hypergraph 
 setting.
 %however, the generative model interpretations of hypergraphs algorithms are still lacking. Recently, a stochastic block model and a corresponding Louvain-type clustering algorithm were proposed for hypergraphs \cite{Chodrow21}. Stochastic block models often assume equally sized blocks which could be violated in real life. The algorithm in \cite{tudisco2022core} that recovers core-periphery structure in hypergraphs was related to a generative model that generates more hyperedges for nodes with high core scores.  
 In particular, we associate customized spectral embedding algorithms with generative models that belong to a new class of range-dependent random hypergraphs that encourages short-range connections between nodes, generalizing existing graph models  \cite{GrindrodPeter2002Rrga, higham2005spectral}. These range-dependent random hypergraphs 
 offer flexibility that is not available in
 stochastic block models \cite{Chodrow21}
 which require block sizes to be pre-specified or inferred.

%We consider a spectral graph embedding algorithms, which is a special case of \cite{Rossi2020} when the motif considered are hyperedges. A periodic version, which extends \cite{GrindrodPeter2010Pr} to hypergraph, is proposed where nodes are embedded into the unit circle. We allow weights for hyperedges to vary with their cardinality since the importance may vary in applications. For example, a conversation between two people may suggest stronger social ties between individuals than a meeting between a group of people; however, when it comes to model disease spread on social networks, the large hyperedges may have a higher impact. 
The rest of the paper is structured as follows. Our notation is introduced in Section~\ref{sec:notation}.
In Sections~\ref{sec:linear_embedding} and \ref{sec:periodic_embedding} we define the 
linear and periodic hypergraph embedding algorithms and derive  associated optimization problems. 
We propose random models associated with the hypergraph embedding algorithms in 
Section~\ref{sec:model}, which leads to a model comparison workflow that quantifies the relative strength of linear versus periodic structures. Numerical studies on synthetic and real-world hypergraphs using the proposed models  are presented in Section~\ref{sec:exp}.

The main contributions of this work are as follows.
\begin{itemize}
\item We propose new range-dependent generative models for hypergraphs that generate linear and periodic cluster patterns. 
\item We establish their connection with linear and periodic spectral embedding algorithms. 
\item We demonstrate on synthetic and real data that, after tuning model parameters to the data, these models can quantify the relative 
strength of linear and periodic structures.
\item We perform prediction of triadic hyperedges
(triangles) using the proposed linear model and show that it outperforms the existing average-score based method \cite{benson2018simplicial} on synthetic hypergraphs, and also on high school and primary school contact data when the amount of training data is limited.
\end{itemize}

\section{Notation}
\label{sec:notation}
We consider \textit{undirected, unweighted} hypergraphs $G = (V, E)$ on the vertex set $V$ containing $n$ nodes and the hyperedge set $E$. 
We let $R \in \mathcal{R}$ be an unordered set of nodes, where $\mathcal{R}$ denotes the collection of all such sets. 
We use $\lvert R \rvert$ to denote the number of  nodes in tuple $R$, that is, its cardinality, and we assume $ 2 \leq \lvert R \rvert \leq T$ for all $R \in E$.

Let $A_R$ indicate the presence of a hyperedge, so that $A_R = 1$ if $R \in E$ and $A_R = 0$ otherwise. We define the $t$-th order 
$n$ by $n$ adjacency matrix $W^{[t]}$ such that $W^{[t]}_{ij}$ counts the number of hyperedges with cardinality $t$ that contain distinctive nodes $i$ and $j$; hence, $W^{[t]}_{ij} = \sum_{R \in \mathcal{R}: |R| =t} A_R  \I{( i \in R)}\I{( j \in R )}$ if $i \neq j$, and  $W^{[t]}_{ij} = 0$ otherwise, where $\I$ is the indicator function. 
Similarly, we define the corresponding $t$-th order diagonal degree matrix $D^{[t]}$ such that $D^{[t]}_{ii} = \sum_{j\in V} W^{[t]}_{ij}$, and
the $t$-th order Laplacian matrix $L^{[t]} = D^{[t]} - W^{[t]}$.
We use $\iu$ to denote $\sqrt{-1}$ and $\bone$ to denote the vector in
$\R^n$ with all entries equal to one.
We let $\ba'$ represent the transpose of a 
real-valued vector $\ba$ and let $\bb^H$ denote the conjugate transpose of a complex-valued 
vector $\bb$. 
We use $\mathcal{P}$ to denote the set of all permutation vectors, that is, all vectors in $\R^n$ that contain each of the integers $1,2,\ldots,n$. We will focus on one-dimensional embedding.
We let $x_i \in \R$ be the location to which node $i$ is assigned, and $\bx = [x_1, x_2, ..., x_n]' \in \R^n$ .

\section{Linear hypergraph embedding}
\label{sec:linear_embedding}
Given a hypergraph, suppose we wish to find node embeddings $\bx \in \R^n $ such that 
hyperedges tend to contain nodes that are a small distance apart.
To formalize this idea, we can define a linear incoherence function $I_{\text{lin}}(\bx, R)$ that sums up the squared Euclidean distance between all nodes pairs in tuple $R$: 
\begin{equation}
I_{\text{lin}}(\bx, R) = \sum_{i,j\in R}(x_i -x_j)^2.
\label{eq:linear_incoherence}
\end{equation}
We may then define the total linear incoherence of the hypergraph, $\eta_{\text{lin}}(G,\bx)$, by aggregating the linear incoherence over all node tuples. Furthermore, we may wish to tune the weights of hyperedges of different cardinalities through a coefficient $c_{\lvert R \rvert} \ge 0$ for node tuple $R$; that is,  
\begin{equation}
    \eta_{\text{lin}}(G,\bx) = \sum_{R \in \mathcal{R}} c_{\lvert R \rvert} A_R I_{\text{lin}}(\bx, R).
    \label{eq:eta_linear}
\end{equation}
One justification for these tuning parameters  $c_{\lvert R \rvert}$ is that they allow us to avoid 
the case where high-cardinality hyperedges dominate the expression.
For example, we could choose $c_{t} = \frac{1}{t(t-1)} $ to balance the contributions from hyperedges with different sizes. A suitable choice of $c_{t}$ may also depend on the relative importance of hyperedges in the application.

In Proposition~\ref{prop:quad} 
we show that the total linear incoherence
maybe be written 
as a quadratic form involving the 
 hypergraph Laplacian matrix
\begin{equation}
    L = \sum_{t=2}^T c_t L^{[t]}. 
    \label{def:hypergraph_Laplacian}
\end{equation}

\begin{proposition}\label{prop:quad}
For any $\bx\in\R^n$ with $L$ defined in (\ref{def:hypergraph_Laplacian}), and 
$\eta_{\text{lin}}(G,\bx)$ defined in (\ref{eq:eta_linear}) we have 
\begin{equation}
\bx'L\bx = \hf\eta_{\text{lin}}(G,\bx).
\label{eq:hypergraph_objective}
\end{equation}
\end{proposition}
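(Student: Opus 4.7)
The plan is to reduce the proposition to the well-known quadratic form identity for a single graph Laplacian and then re-aggregate over hyperedges of each cardinality. Specifically, for each $t \in \{2,\ldots,T\}$ the matrix $L^{[t]} = D^{[t]} - W^{[t]}$ is a symmetric graph Laplacian, so the standard identity gives
\begin{equation*}
\bx' L^{[t]} \bx \;=\; \tfrac{1}{2} \sum_{i,j \in V} W^{[t]}_{ij} (x_i - x_j)^2.
\end{equation*}
I would either cite this or derive it in one line by expanding $\bx'D^{[t]}\bx - \bx'W^{[t]}\bx$ using $D^{[t]}_{ii} = \sum_j W^{[t]}_{ij}$ and the symmetry of $W^{[t]}$.

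Next, I would substitute the definition
$W^{[t]}_{ij} = \sum_{R \in \mathcal{R}:\, |R|=t} A_R \I(i\in R)\I(j\in R)$ for $i \neq j$ (and note that diagonal terms contribute nothing since $(x_i - x_i)^2 = 0$, so the $i \neq j$ restriction is harmless). Swapping the order of summation between $(i,j)$ and $R$ turns the inner double sum into
\begin{equation*}
\sum_{R \in \mathcal{R}:\, |R|=t} A_R \sum_{i,j \in R} (x_i - x_j)^2 \;=\; \sum_{R \in \mathcal{R}:\, |R|=t} A_R \, I_{\text{lin}}(\bx, R),
\end{equation*}
using the definition \eqref{eq:linear_incoherence}.

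Finally, multiplying by $c_t$ and summing over $t$ from $2$ to $T$ gives
\begin{equation*}
\bx' L \bx \;=\; \sum_{t=2}^T c_t \, \bx' L^{[t]} \bx \;=\; \tfrac{1}{2} \sum_{t=2}^T c_t \sum_{R:\, |R|=t} A_R \, I_{\text{lin}}(\bx, R) \;=\; \tfrac{1}{2} \sum_{R \in \mathcal{R}} c_{|R|} A_R \, I_{\text{lin}}(\bx, R),
\end{equation*}
which by \eqref{eq:eta_linear} is $\tfrac{1}{2}\eta_{\text{lin}}(G,\bx)$, as required.

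There is no real obstacle here; the proof is a bookkeeping exercise. The only point needing mild care is the interchange of the summation over node pairs with the summation over hyperedges, and the conventions on whether $i=j$ or ordered versus unordered pairs are counted. Since $I_{\text{lin}}$ is defined as an unrestricted sum over $i,j \in R$ (diagonal terms vanishing) and $W^{[t]}$ is symmetric and defined to vanish on the diagonal, both sides count ordered pairs with the same weight, so no combinatorial factor needs to be inserted.
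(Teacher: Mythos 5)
Your proof is correct and follows essentially the same route as the paper's: establish the standard quadratic-form identity for each $L^{[t]}$, substitute the hyperedge-sum definition of $W^{[t]}_{ij}$, swap summation order, and aggregate over $t$ with the weights $c_t$. Your extra remark about diagonal terms and ordered-pair conventions is a sensible clarification of a point the paper leaves implicit, but it does not change the argument.
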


\begin{proof}
It is straightforward to show that  $\bx'L^{[t]}\bx = \sum_{i,j\in V} x_i (D^{[t]}_{ij} - W^{[t]}_{ij}) x_j= \hf\sum_{i,j\in V} W^{[t]}_{ij}(x_i - x_j)^2$. Therefore,
\begin{align*}
    \bx' L \bx 
    &= \sum_{t=2}^T \bx' c_t L^{[t]} \bx\\
    &= \hf\sum_{t=2}^T  c_t \sum_{i,j\in V} W^{[t]}_{ij}(x_i - x_j)^2 \\
    &= \hf\sum_{t=2}^T  c_t \sum_{i,j\in V} \sum_{R \in \mathcal{R}: |R| =t} A_R  \I{( i \in R )}\I{( j \in R )} (x_i - x_j)^2\\
    &= \hf\sum_{R \in \mathcal{R}}c_{\lvert R \rvert} A_R  \sum_{i,j\in R}  (x_i - x_j)^2 = \hf\eta_{\text{lin}}(G,\bx).
\end{align*}
\end{proof}
We note that each Laplacian $L^{[t]}$
is symmetric and positive semi-definite with smallest eigenvalue $0$.

\begin{assumption}
We assume that the 
unweighted, undirected graph associated with 
the binarized version of $L$ is connected.
It then follows that
$L$ has a single eigenvalue equal to $0$ with  
all other eigenvalues positive. We further assume that 
there is a unique smallest positive eigenvalue,
$\lambda_2$.
(The eigenvector $\bv^{[2]}$ corresponding to $\lambda_2$
is a generalization of the classic Fiedler vector.)
\label{assumption:connected}
\end{assumption}

In minimizing 
the 
total linear incoherence
(\ref{eq:eta_linear}) we must avoid
the trivial cases where 
(a) all nodes are located arbitrarily close to the origin and (b) all nodes are assigned to 
the same location. Hence it is natural to impose the constraints 
$\| \bx \|_2 =1$ and $\bx' \bone = 1$.
It then follows from the Rayleigh-Ritz Theorem
\cite[Theorem~4.2.2]{HJ85}
that the quadratic form in 
Proposition~\ref{prop:quad} is solved by 
$\bx = \bv^{[2]}$. This leads us to Algorithm~\ref{algo:lin_spec_embed}
below, which could also be considered as a special case of the algorithm in \cite{galuppi2021spectral} where the motifs considered are hyperedges.

\begin{algorithm}[H]
\SetAlgoLined
\KwResult{Node embedding $\bx \in \R^{n}$}
\textbf{Input hyperedge adjacency matrices} $W^{[2]},W^{[3]}, ...,W^{[T]}$\;
\textbf{Construct diagonal degree matrices} $D^{[t]}_{ii} = \sum_{j\in V} W^{[t]}_{ij}$\;
\textbf{Construct $t$-th order Laplacians} $L^{[t]} = D^{[t]} - W^{[t]}$\;
\textbf{Construct hypergraph Laplacian} $L = \sum_{t=2}^T c_t L^{[t]}$\;
\textbf{Compute second smallest eigenvalue
$\lambda_2$ and corresponding eigenvector $\bv^{[2]}$}\;
\textbf{Embed nodes using} $\bx =
\bv^{[2]}$
\caption{Linear Hypergraph Embedding Algorithm}
\label{algo:lin_spec_embed}
\end{algorithm}

\begin{remark}
Algorithm \ref{algo:lin_spec_embed} could be extended to higher dimensional embeddings where node $i$ is assigned to $\bx^{[i]} \in \R^{d}$ for  $d >1$. In this case we could generalize (\ref{eq:linear_incoherence}) to  
\begin{equation}
I_{\text{lin}}(\bx, R) = \sum_{i,j\in R} \lVert \bx^{[i]} -\bx^{[j]}\rVert_2^2.
\label{eq:higher_order_lin_incoherence}
\end{equation}
If we require the coordinate directions to be orthogonal, then 
the embedding is found via the eigenvectors corresponding to the $d$-smallest non-zero eigenvalues; see
\cite{spec_hkk} for details in the graph case.
\end{remark}

\section{Periodic hypergraph embedding}
\label{sec:periodic_embedding}
In this section, we look at the periodic analogue of linear hypergraph embedding. Here nodes are embedded into the unit circle rather than along the real line.
Such a periodic
structure formed the basis
of the 
classic ``small world'' model of Watts and Strogatz \cite{WattsD.J1998Cdos}.
Results in \cite{GrindrodPeter2010Pr} showed that 
certain real networks are better represented
via this type of ``wrap-around'' notion of distance.
Hence, it is of interest to develop concepts that apply to the hypergraph case.

We may position nodes on the unit circle by mapping them to phase angles $\btheta = \{ \theta_i \}_{i=1}^{n} \in [0, 2\pi)$. We may then use a periodic incoherence function to quantify the distance between node pairs in the tuple $R$:
\begin{equation}
I_{\text{per}}(\btheta, R) = \sum_{i,j\in R}|e^{\iu\theta_i}-e^{\iu\theta_j}|^2.
\label{eq:periodic_incoherence}
\end{equation}
Then the total periodic incoherence of the hypergraph  becomes
\begin{equation}
\eta_{\text{per}}(G,\btheta) = \sum_{R \in \mathcal{R}} c_{\lvert R \rvert} A_R  I_{\text{per}}(\btheta, R). 
\label{eq:graph_incoherence_periodic}
\end{equation}
In Proposition~\ref{prop:per} below, we relate
the total periodic incoherence to a quadratic form
involving the hypergraph 
 Laplacian matrix (\ref{def:hypergraph_Laplacian}).
\begin{proposition}\label{prop:per}
Let $\bpsi \in \C^n$ be such that $\psi_j = e^{\iu\theta_j}$. Then
\begin{equation}
\bpsi^{H}L\bpsi =\hf\eta_{\text{per}}(G,\btheta).
\label{eq:qper}
\end{equation} 
\end{proposition}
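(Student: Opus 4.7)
The plan is to prove Proposition~\ref{prop:per} by mirroring the structure of Proposition~\ref{prop:quad}, replacing the real quadratic form $\bx' L \bx$ with the Hermitian form $\bpsi^H L \bpsi$. The only genuinely new ingredient is a complex-valued analogue of the basic Laplacian identity used in the linear case; after that, the hyperedge-indicator unpacking is identical to what was done before.

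The first step is to establish, for each $t$, the identity
\begin{equation*}
\bpsi^H L^{[t]} \bpsi \;=\; \hf \sum_{i,j\in V} W^{[t]}_{ij}\, |\psi_i - \psi_j|^2.
\end{equation*}
I would do this by expanding $|\psi_i - \psi_j|^2 = |\psi_i|^2 + |\psi_j|^2 - \overline{\psi_i}\psi_j - \psi_i\overline{\psi_j}$. Using the symmetry of $W^{[t]}$ to relabel indices, the two modulus-squared contributions combine to $\sum_{i,j} W^{[t]}_{ij}|\psi_i|^2 = \sum_i D^{[t]}_{ii}|\psi_i|^2$, and the two cross terms likewise combine to $2\sum_{i,j} W^{[t]}_{ij}\overline{\psi_i}\psi_j$ (using that $W^{[t]}$ is real and symmetric, so these conjugate sums coincide). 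Halving recovers $\bpsi^H D^{[t]} \bpsi - \bpsi^H W^{[t]} \bpsi = \bpsi^H L^{[t]} \bpsi$.

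With this identity in hand, I would substitute $L = \sum_{t=2}^T c_t L^{[t]}$ and repeat the hyperedge-indicator expansion from Proposition~\ref{prop:quad} verbatim:
\begin{align*}
\bpsi^H L \bpsi
&= \hf \sum_{t=2}^T c_t \sum_{i,j\in V} W^{[t]}_{ij}\, |\psi_i - \psi_j|^2 \\
&= \hf \sum_{t=2}^T c_t \sum_{i,j\in V} \sum_{R\in\mathcal{R}:\,|R|=t} A_R\, \I{(i\in R)}\I{(j\in R)}\, |\psi_i - \psi_j|^2 \\
&= \hf \sum_{R\in\mathcal{R}} c_{|R|} A_R \sum_{i,j\in R} |\psi_i - \psi_j|^2.
\end{align*}
Substituting $\psi_j = e^{\iu\theta_j}$ and comparing with (\ref{eq:periodic_incoherence})--(\ref{eq:graph_incoherence_periodic}) gives exactly $\hf\eta_{\text{per}}(G,\btheta)$.

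I expect no serious obstacle: the result is essentially the Hermitian twin of Proposition~\ref{prop:quad}, and the unit-modulus property $|\psi_j|=1$ is in fact not needed for the algebraic identity itself (it matters only for interpreting $\bpsi$ as positions on the unit circle). The one place to be careful is the cross-term cancellation, where one has to verify that $\sum_{i,j} W^{[t]}_{ij}\psi_i\overline{\psi_j}$ and $\sum_{i,j} W^{[t]}_{ij}\overline{\psi_i}\psi_j$ are equal rather than merely conjugate; this follows from the symmetry $W^{[t]}_{ij} = W^{[t]}_{ji}$ by swapping the summation indices.
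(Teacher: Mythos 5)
Your proof is correct and follows essentially the same route as the paper: establish the Hermitian Laplacian identity $\bpsi^H L^{[t]}\bpsi = \hf\sum_{i,j} W^{[t]}_{ij}|\psi_i-\psi_j|^2$ for each $t$, then repeat the hyperedge-indicator expansion from Proposition~\ref{prop:quad}. The only (minor) difference is that the paper's algebra uses $|\psi_j|=1$ to symmetrize (writing $2-e^{-\iu\theta_i}e^{\iu\theta_j}-e^{-\iu\theta_j}e^{\iu\theta_i}$), whereas your version correctly observes that the identity holds for arbitrary $\bpsi\in\C^n$ without the unit-modulus assumption.
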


\begin{proof}
We have 
\begin{align*}
    \bpsi^{H}L^{[t]}\bpsi &= \bpsi^{H}D^{[t]}\bpsi -\bpsi^{H}W^{[t]}\bpsi\\
    &= \sum_{i\in V} e^{-\iu\theta_i} (\sum_{j\in V} W^{[t]}_{ij}) e^{\iu\theta_i} - \sum_{i, j\in V} e^{-\iu\theta_i}  W^{[t]}_{ij} e^{\iu\theta_j}\\
    &=\sum_{i, j\in V} W^{[t]}_{ij} (1-e^{-\iu\theta_i}   e^{\iu\theta_j})\\
    &=\hf\sum_{i, j\in V} W^{[t]}_{ij} (2-e^{-\iu\theta_i}e^{\iu\theta_j}-e^{-\iu\theta_j}e^{\iu\theta_i})\\
    &=\hf\sum_{i, j\in V} W^{[t]}_{ij} (e^{\iu\theta_i}-e^{\iu\theta_j})(e^{-\iu\theta_i}-e^{-\iu\theta_j})\\
    &=\hf  \sum_{i,j\in V} W^{[t]}_{ij} |e^{\iu\theta_i}-e^{\iu\theta_j}|^2.
\end{align*}
Then the proof may be completed in a similar way to the 
proof of Proposition~\ref{prop:quad}.
\end{proof}

Appealing again to the Rayleigh–Ritz theorem
\cite[Theorem~4.2.2]{HJ85},
the quadratic form in
(\ref{eq:qper}) 
is minimized over all
$\bpsi \in \C^n$
with 
$\| \bpsi \|_2 = 1$ and 
$\bpsi^H \bone = 1$ by taking 
$\bpsi = \bv^{[2]}$.
However, this real-valued eigenvector cannot be
proportional to a vector with components of the 
form $e^{\iu\theta_j}$.
Hence, following the approach in
\cite{GrindrodPeter2010Pr}
we will use the
heuristic of 
setting 
\begin{equation}
\theta_i = \text{angle}(v^{[2]}_i+ \iu v^{[3]}_i) \in [-\pi, \pi],
\label{eq:inverse_tangent}
\end{equation}
defined as $v^{[2]}_i+ \iu v^{[3]}_i = \lvert v^{[2]}_i+ \iu v^{[3]}_i \rvert \cdot e^{i\theta_i}$, where $ \bv^{[3]}$ is an eigenvector corresponding to the next-smallest
eigenvalue of $L$. Such a heuristic assumption converts two real eigenvectors into a complex vector, which gives an approximate solution to the minimization problem. The resulting workflow is summarized in 
Algorithm~\ref{algo:per_spec_embed}.

We also note that for simple unweighted, undirected graphs, finding $\btheta$ that minimizes $\eta_{\text{per}}(G,\btheta)$ is equivalent to the formulation proposed in \cite{GrindrodPeter2010Pr}. This may be shown by letting $u_i = \cos \theta_i$ and $z_i = \sin \theta_i$ and expanding
(\ref{eq:graph_incoherence_periodic}) as
\[
 \sum_{i \in V} \sum_{j \in V} W^{[2]}_{ij} \left( (u_i - u_j)^2 + (z_i - z_j)^2 \right),
 \]
 which simplifies to
 \begin{equation}
 2 \bu^T (D^{[2]} -W^{[2]}) \bu + 2 \bz ^T (D^{[2]} -W^{[2]}) \bz.
 \label{eq:Lapeq}
 \end{equation}
 This is essentially equation (3.1) in \cite{GrindrodPeter2010Pr}, derived from a slightly different perspective.

We then arrive at Algorithm~\ref{algo:per_spec_embed} below.

\begin{algorithm}[H]
\SetAlgoLined
\KwResult{Node embedding $\btheta = \{ \theta_i \}_{i=1}^{n} \in [0, 2\pi)$}
\textbf{Input adjacency matrices} $W^{[2]},W^{[3]}, ...,W^{[T]}$\;
\textbf{Construct diagonal degree matrices} $D^{[t]}_{ii} = \sum_{j\in V} W^{[t]}_{ij}$\;
\textbf{Construct $t$-th order Laplacians} $L^{[t]} = D^{[t]} - W^{[t]}$\;
\textbf{Construct hypergraph Laplacian} $L = \sum_{t=2}^T c_t L^{[t]}$\;
\textbf{Compute second and third smallest eigenvalues $\lambda_2$ and $\lambda_3$ and corresponding eigenvectors 
$ \bv^{[2]}$
and
$ \bv^{[3]}$}\;
\textbf{Calculate phase angles} $\theta_i=\text{angle}(v^{[2]}_i+ \iu v^{[3]}_i)$\;
\textbf{Embed nodes using} $\btheta$ 
\caption{Periodic Hypergraph Embedding Algorithm}
\label{algo:per_spec_embed}
\end{algorithm}

\section{Generative hypergraph models}
\label{sec:model}

We now discuss a connection between the minimization of total incoherence and generative models.  Let us consider finding a node embedding $\bx \in \R^n$ that minimizes a generic total graph incoherence expression 
\begin{equation}
\eta(G) = \sum_{R \in \mathcal{R}} c_{\lvert R \rvert} A_R I(\bx, R),
\label{eq:geninc}
\end{equation}
for a non-negative incoherence function $I(\bx, R)$.
We consider the case where the $x_i \in \R$ must take 
distinct values from a discrete set $\{ \nu_i \}_{i=1}^n$, where $\nu_i \in \R$; that is,
we must have 
$x_i = \nu_{p_i}$, where $p \in {\mathcal{P}}$ is a permutation vector.
In the linear case, this set may be the integers from $1$ to $n$ and in the periodic case
this set may be equally spaced angles
in $[0,2\pi)$.

Now consider a random hypergraph model where each hyperedge involving node tuple $R\in \mathcal{R}$ is generated independently with probability
\begin{equation}
\textbf{P}(A_R = 1) = f_{R}(\bx, R),
\label{def:unweighted_model_form}
\end{equation}
for a function $f_{R}$ that takes values between 0 and 1. We have the following connection. 
\begin{theorem} 
Suppose $\bx \in \R^n$ is constrained to take values
from a discrete set such that $x_i = \nu_{p_i}$, where $p \in {\mathcal{P}}$ is a permutation vector. Then minimizing the total incoherence (\ref{eq:geninc}) over all such $\bx$ is equivalent to maximizing over all such $\bx$ the likelihood that the hypergraph is generated by a model of the form (\ref{def:unweighted_model_form}), where
\begin{equation}
f_{R}(\bx, R) = \frac{1}{1+e^{\gamma c_{\lvert R \rvert} I(\bx, R)}}
\label{eq:unweighted_model}
\end{equation}
for any positive $\gamma$.
\label{thm:like}
\end{theorem}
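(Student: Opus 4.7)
The plan is to write out the log-likelihood of the observed hypergraph under model (\ref{def:unweighted_model_form}) with $f_R$ given by (\ref{eq:unweighted_model}), simplify using the sigmoid form, and then isolate the part of the log-likelihood that actually depends on $\bx$. By independence of hyperedges,
$\log \mathcal{L}(\bx) = \sum_{R \in \mathcal{R}} \bigl[A_R \log f_R(\bx, R) + (1-A_R)\log(1-f_R(\bx, R))\bigr]$.
The sigmoid parametrization in (\ref{eq:unweighted_model}) satisfies the clean identity $\log\bigl((1-f_R)/f_R\bigr) = \gamma c_{\lvert R \rvert} I(\bx, R)$, so after substituting and rearranging I expect the log-likelihood to split as
$\log \mathcal{L}(\bx) = -\gamma\, \eta(G,\bx) \,+\, \gamma \sum_{R \in \mathcal{R}} c_{\lvert R \rvert} I(\bx, R) \,-\, \sum_{R \in \mathcal{R}} \log\bigl(1+e^{\gamma c_{\lvert R \rvert} I(\bx, R)}\bigr)$.

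The key observation is then a symmetry argument. For each fixed cardinality $t$, the sums above run over \emph{all} unordered $t$-subsets of $V$, and each summand depends on $\bx$ only through the multiset $\{x_i : i \in R\}$. Therefore the last two sums are symmetric functions of the coordinates of $\bx$. Because $\bx$ is constrained so that $x_i = \nu_{p_i}$ for some permutation $p \in \mathcal{P}$, the multiset of coordinate values is always $\{\nu_1,\ldots,\nu_n\}$, and these two sums take the same value for every feasible $\bx$. Thus they are additive constants on the feasible set, and maximising $\log \mathcal{L}(\bx)$ over permutation-constrained $\bx$ reduces to minimising $\gamma\, \eta(G,\bx)$. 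Since $\gamma>0$, this is equivalent to minimising $\eta(G,\bx)$, which gives the claim.

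The main obstacle is the symmetry step: one needs to record that $I(\bx, R)$ depends on $\bx$ only through $\{x_i : i \in R\}$, which is the standing assumption on incoherence functions (and is manifestly satisfied by the linear incoherence (\ref{eq:linear_incoherence}) and the periodic incoherence (\ref{eq:periodic_incoherence})). Given this, the algebra is essentially bookkeeping; the real content of the theorem is that the sigmoid choice (\ref{eq:unweighted_model}) is engineered precisely so that the log-odds of each hyperedge is linear in $c_{\lvert R \rvert} I(\bx, R)$, which is what makes the data-dependent part of $\log \mathcal{L}(\bx)$ collapse to $-\gamma\, \eta(G,\bx)$ up to a permutation-invariant constant.
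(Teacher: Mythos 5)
Your proposal is correct and follows essentially the same route as the paper: both decompose the log-likelihood into the data-dependent term $-\gamma\,\eta(G,\bx)$ plus a remainder summed over all of $\mathcal{R}$ (the ``null hypergraph'' contribution), and both observe that this remainder is invariant under the permutation constraint. Your explicit symmetry argument --- that each summand depends on $\bx$ only through the multiset $\{x_i : i \in R\}$, whose union over all tuples of a given cardinality is fixed by the constraint $x_i = \nu_{p_i}$ --- is a welcome elaboration of a step the paper merely asserts.
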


\begin{proof}
Using (\ref{def:unweighted_model_form}),
the likelihood of the whole hypergraph is
\begin{align*}
L(G) &= \prod_{R \in \mathcal{R}:A_R = 1} f_{R}(\bx, R)\prod_{R \in \mathcal{R}:A_R = 0} \left(1-f_{R}(\bx, R)\right) \\
&=\prod_{R \in \mathcal{R}:A_R = 1} \frac{f_{R}(\bx, R)}{1-f_{R}(\bx, R)}\prod_{R \in \mathcal{R}} \left(1-f_{R}(\bx, R)\right), 
\end{align*}
which leads to the log-likelihood
\begin{align}
\ln (L(G)) &=\sum_{R \in \mathcal{R}:A_R = 1} \ln\left(\frac{f_{R}(\bx, R)}{1-f_{R}(\bx, R)}\right)+\sum_{R \in \mathcal{R}} \ln \left(\left(1-f_{R}(\bx, R)\right) \right).
\label{eq:graph_logP}
\end{align}
The second term on the right-hand side, which is the probability of the null hypergraph, is independent of the the permutation. Hence,
with 
(\ref{eq:unweighted_model}), 
maximizing the log-likelihood of the hypergraph is equivalent to minimizing
\begin{align}
\sum_{R \in \mathcal{R}:A_R = 1} \ln\left(\frac{1-f_{R}(\bx, R)}{f_{R}(\bx, R)}\right) =  \sum_{R \in \mathcal{R}} c_{\lvert R \rvert} A_R \gamma I(\bx, R) = \gamma \, \eta(G).
\end{align}
\end{proof}

\begin{remark}
Theorem \ref{thm:like} could be extended to the case where node $i$ is assigned to $\bx^{[i]} \in \R^{d}$ for  $d >1$, and a higher-dimensional incoherence function in (\ref{eq:higher_order_lin_incoherence}) is considered. In this scenario, we constrain $\bx^{[i]} \in \R^{d}$ to take values from a discrete set $\{ \bnu^{[i]} \}_{i=1}^n$ where $\bnu^{[i]} \in \R^d$, such that $\bx^{[i]} = \bnu^{[p_i]}$ for a permutation vector $p \in {\mathcal{P}}$. Then we could follow the same arguments as in Theorem \ref{thm:like} to derive a model described by (\ref{def:unweighted_model_form}) and (\ref{eq:unweighted_model}), where $\bx \in \R^{n\times d}$ and $I(\bx, R) = \sum_{i,j\in R} \lVert \bx^{[i]} -\bx^{[j]}\rVert_2^2$. 
\end{remark}

For a hypergraph generated by model (\ref{eq:unweighted_model}) the number of hyperedges connecting the node tuple $R$ follows a Bernoulli distribution with probability $1/1+e^{\gamma c_{\lvert R \rvert} I(\bx, R)}$. The log-odds of the hyperedge decay linearly with the incoherence of the node tuple since

\begin{equation*}
\ln(f_{R}(\bx, R) /(1-f_{R}(\bx, R) )) = -\gamma c_{\lvert R \rvert} I(\bx, R),
\end{equation*}
where the factor $\gamma c_{\lvert R \rvert}$ determines the decay rate. The probability of a hyperedge is highest when all nodes overlap, i.e., $I(\bx, R)=0$, which gives a $1/2$ likelihood. 
If we generate hyperedges in repeated trials for the node tuple $R$, the variance of the number of hyperedges is $e^{c_{\lvert R \rvert}\gamma I(\bx, R)}/(1+e^{c_{\lvert R \rvert}\gamma I(\bx, R)})^2$. When $I(\bx, R)=0$, the largest variance of $1/4$ is achieved. The expected total number of hyperedges of the whole hypergraph $G$ can be expressed as  
\begin{equation*}
\sum_{R\in \mathcal{R}} f_R(\bx, R) = \sum_{R\in \mathcal{R}} \frac{1}{1+e^{\gamma c_{\lvert R \rvert} I(\bx, R)}}.
\end{equation*} 

We note that Theorem~\ref{thm:like} introduces
the extra scaling parameter $\gamma$.
This parameter plays no direct role in Algorithms~\ref{algo:lin_spec_embed} and \ref{algo:per_spec_embed}. However, a value for $\gamma$ is needed if we wish to compare
the likelihoods of the two models having inferred
the embeddings.
In principle, we may fit the parameter $\gamma$ to a given hypergraph by matching the observed number of hyperedges with their expectation. However, from a computational point of view, this is rather challenging in general, since the computational complexity of the expectation calculation is $\mathcal{O}(n^T)$ when the maximum cardinality of a considered hyperedge is $T$. Hence, 
given an embedding, in practice we prefer to pick $\gamma$ by maximizing the likelihood, as described in the following subsection.

\subsection*{Model comparison}
\label{subsec:comp}
Under the assumption that a given hypergraph arose from a mechanism that favours connections between ``nearby'' nodes (in some latent, unobservable configuration), it is of interest to know whether a linear or periodic distance provides a better description.We may address this question using a model comparison approach. As in Sections~\ref{sec:linear_embedding} and \ref{sec:periodic_embedding}, we consider one-dimensional embeddings, such that both the linear and periodic version have $n+1$ parameters given the Laplacian coefficients: $n$ node embeddings and a decay parameter $\gamma$. The node embeddings will be estimated from Algorithms~\ref{algo:lin_spec_embed} and \ref{algo:per_spec_embed}. For any choice of $\gamma$, we may then calculate the corresponding likelihood for each type of hypergraph, given the embedding. We may then compare the models by reporting plots of likelihood versus $\gamma$ or by reporting the maximum likelihood over all $\gamma$. We note that Theorem \ref{thm:like} states that node embeddings that minimize the incoherence also maximize the graph likelihood under the given discrete constraints. We note that Algorithms~\ref{algo:lin_spec_embed} and \ref{algo:per_spec_embed} minimize linear and periodic incoherence after relaxing the discrete constraints in Theorem \ref{thm:like} for computational feasibility. Such heuristics are often used in discrete programming. Therefore instead of the exact maximum likelihood, we get an estimated maximum likelihood.
An overall workflow is shown below in Algorithm~\ref{algo:model_comparison}.\\
\begin{algorithm}[H]
\SetAlgoLined
\KwResult{Comparison of possible graph structures}
\textbf{Input hypergraph} $G$\;
\textbf{Compute linear embedding using Algorithm \ref{algo:lin_spec_embed}}\;
\textbf{Compute periodic embedding using Algorithm \ref{algo:per_spec_embed}}\;
\textbf{Calculate maximum likelihood of linear model over $\gamma >0$} using (\ref{eq:graph_logP}), (\ref{eq:unweighted_model}), and (\ref{eq:linear_incoherence}) \;
\textbf{Calculate maximum likelihood of periodic model over $\gamma >0$} using (\ref{eq:graph_logP}), (\ref{eq:unweighted_model}), and (\ref{eq:periodic_incoherence}) \;\textbf{Compare likelihoods or report maxima} 
\caption{Model Comparison}
\label{algo:model_comparison}
\end{algorithm}

\section{Experiments}
\label{sec:exp}

\subsection*{Model Comparison}
\subsubsection*{Synthetic Hypergraphs}
In this section we test the performance on 
Algorithms~\ref{algo:lin_spec_embed}, 
\ref{algo:per_spec_embed} and \ref{algo:model_comparison} in a controlled setting.
To do this, we generate hypergraphs with 
either linear or periodic clustered structure using
the proposed random model. For simplicity, we only consider dyadic and triadic edges, although the experiments could be extended to include higher-order hyperedges.

\paragraph{Linear hypergraph with clustered nodes}
\label{subsec:synthetic_linear}
We first generate hypergraphs with $K$ planted clusters  $C_1, C_2, ..., C_K$ of size $m$, and  $n = m K$ nodes. 
We embed the nodes
using 
 $x_{i} = \frac{2(l-1)}{K} + \sigma$ if $i\in C_l$,  where $\sigma \sim 
\mathrm{unif}(-a,a)$ is an additive uniform noise. Hyperedges are then drawn randomly according to  model (\ref{eq:unweighted_model}) with the linear incoherence (\ref{eq:linear_incoherence}). 

We note that, in practice, the embedding algorithms
must choose values $c_2$ and $c_3$ in order to 
form the hypergraph Laplacian,
and the model comparison algorithm must 
choose a value for $\gamma$. 
We are therefore interested in the sensitivity of
the process with respect to $c_2$ and $c_3$, and in the accuracy with which $\gamma$ can be estimated. 
We use $c_2$, $c_3$ and $\gamma_0$ to denote parameters used by the generative model to create the synthetic data; we also let $c^*_2$ and $c^*_3$ denote
 the corresponding parameters used in the spectral embedding algorithms and let 
 $\gamma^*$ represent an inferred value of $\gamma_0$.
 We choose $c_2 =1$ and $c_3 = 1/3$ so that the weight of a hyperedge is inversely proportional to the number of node pairs involved.
We let $m = 50$, $K=5$,  $a = 0.05$, and vary the decay parameter $\gamma_0$ from 0 to 10. 
Figure~\ref{fig:W2_linear} shows an example of the dyadic adjacency matrix, $W^{[2]}$, with $\gamma_0 = 4$, where dots represent non-zeros. A corresponding triadic adjacency matrix, $W^{[3]}$,
is shown in 
Figure~\ref{fig:W3_linear}. In all our tests we
discard hypergraphs that do not satisfy Assumption~\ref{assumption:connected}.

For each synthetic hypergraph, we estimate the maximum log-likelihood assuming a linear or a periodic structure using Algorithm \ref{algo:model_comparison}. For each input decay parameter $\gamma_0$, 40 hypergraphs are generated independently and the average maximum log-likelihood is plotted in Figure~\ref{fig:lnP_linear}. The shaded regions represent the estimated $80\%$ confidence interval. In this case, the linear model correctly achieves a higher average maximum log-likelihood. The tight bound of the confidence interval suggests that the result is consistent across random trials.

We then perform K-means clustering using the periodic and linear embeddings assuming 5 clusters and plot the Adjusted Rand Index (ARI) \cite{RandIndex, hubert1985comparing, chris2022ari} in  \ref{fig:rand_linear}. 
Here, a larger ARI indicates a better clustering result .
The dotted line shows the average over 40 independently trials for each $\gamma_0$ value and the shaded area is the estimated $80\%$ confidence interval. The plot suggests that the clustering from the linear embedding outperforms the clustering from
the periodic embedding.

We are interested in the effect of parameters $c_3$ and $c_3^*$ that control the weight of triadic edges in the random graph model and spectral embedding algorithm respectively. To conduct an  experiment, we fix the weight of dyadic edges $c_2 = 1$, $c_2^*=1$, and decay parameter $\gamma_0 = \gamma^*= 1$, while varying $c_3$ and $c_3^*$. The maximum log-likelihood of the linear model (Figure \ref{fig:lnP_heat_linear}) and the ARIs using the linear embedding (Figure \ref{fig:Rand_heat_linear}) are shown as heat-maps  over $c_3$ and $c_3^*$. Values are the average over 40 random trials. Overall, choosing $c_3^* = c_3$, gives the highest maximum likelihood. Therefore, when the true $c_3$ is not known, it could be estimated using a maximum likelihood method. In terms of the clustering result we note that when $c_3$ is large, for example, when $c_3 > 0.3$, using information from triadic edges by setting $c_3^*>0$ achieves a better ARI than using only diadic edges, i.e., $c_3^*=0$. This is because a large $c_3$ encourages more triadic edges to be formed within clusters, whereas a small $c_3$ leads to more triadic edges between clusters. In general the larger the $c_3$, the less sensitive the ARI is to the choice of $c_3^*$

\begin{figure}
\centering
\begin{subfigure}{.4\textwidth}
  \centering
  \includegraphics[width=\linewidth]{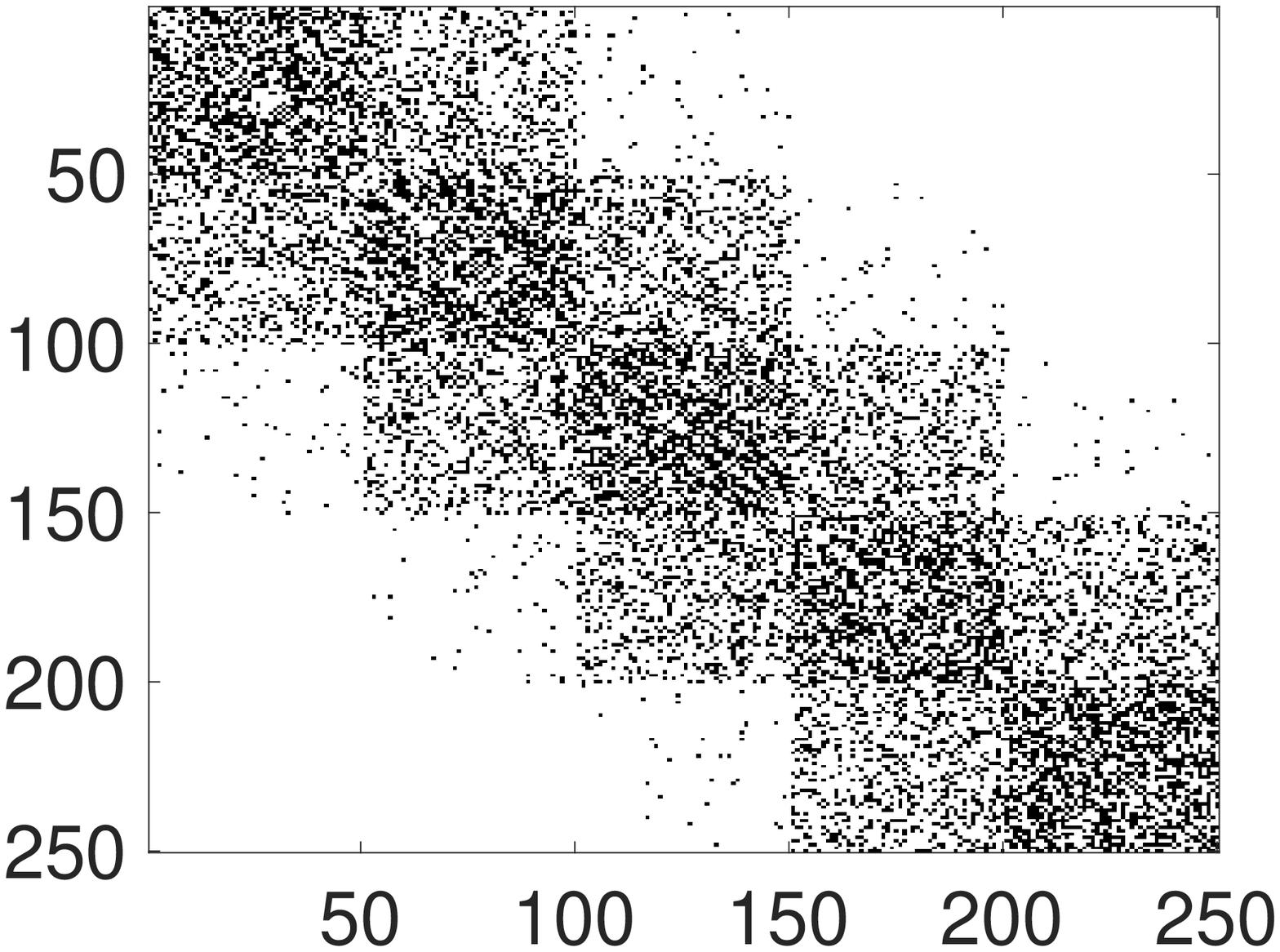}
  \caption{}
  \label{fig:W2_linear}
\end{subfigure}%
\hspace{1cm}
\begin{subfigure}{.4\textwidth}
  \centering
  \includegraphics[width=\linewidth]{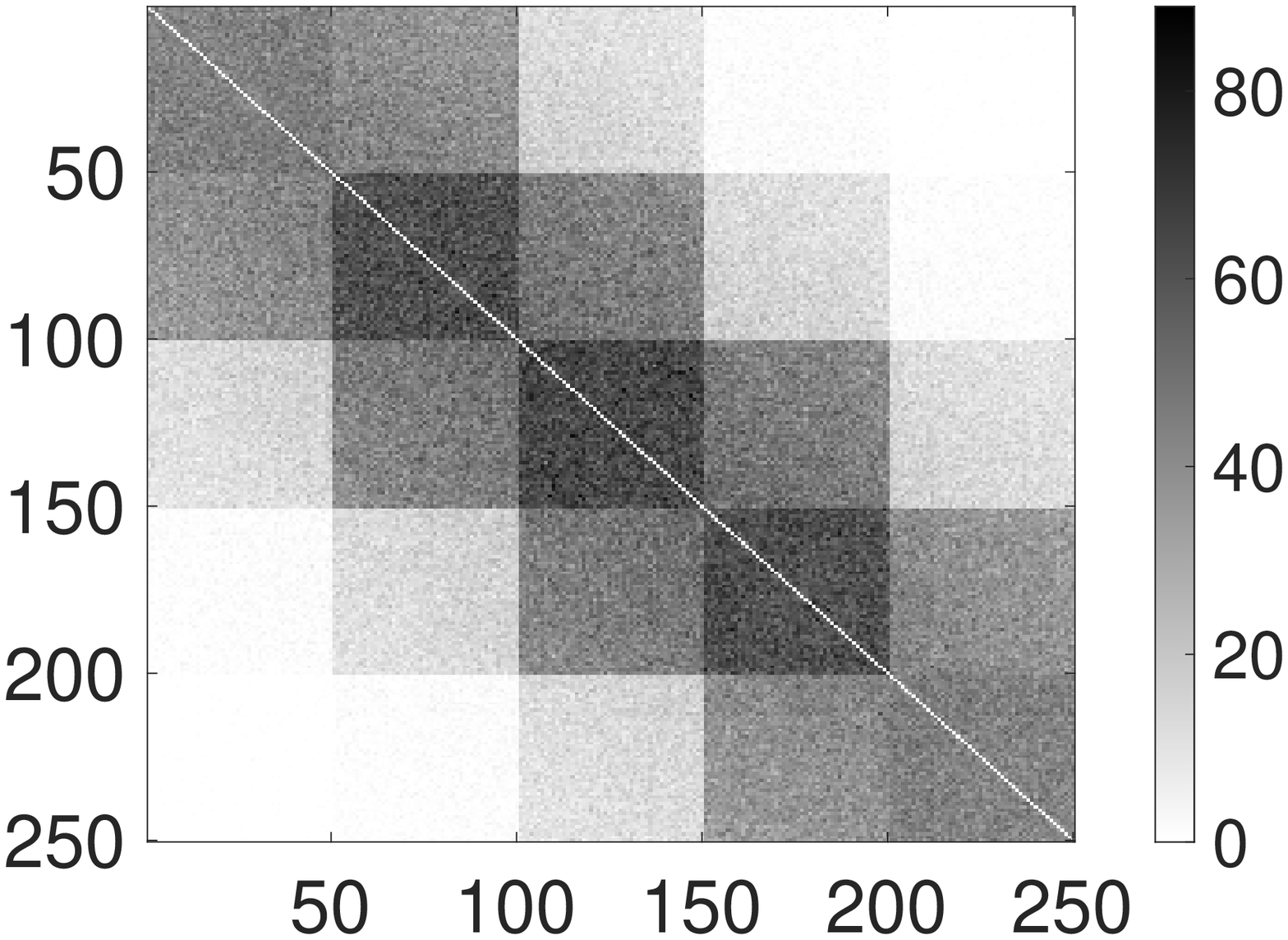}
  \caption{}
  \label{fig:W3_linear}
\end{subfigure}%
\hspace{1cm}
\begin{subfigure}{.4\textwidth}
  \centering
  \includegraphics[ width=\linewidth]{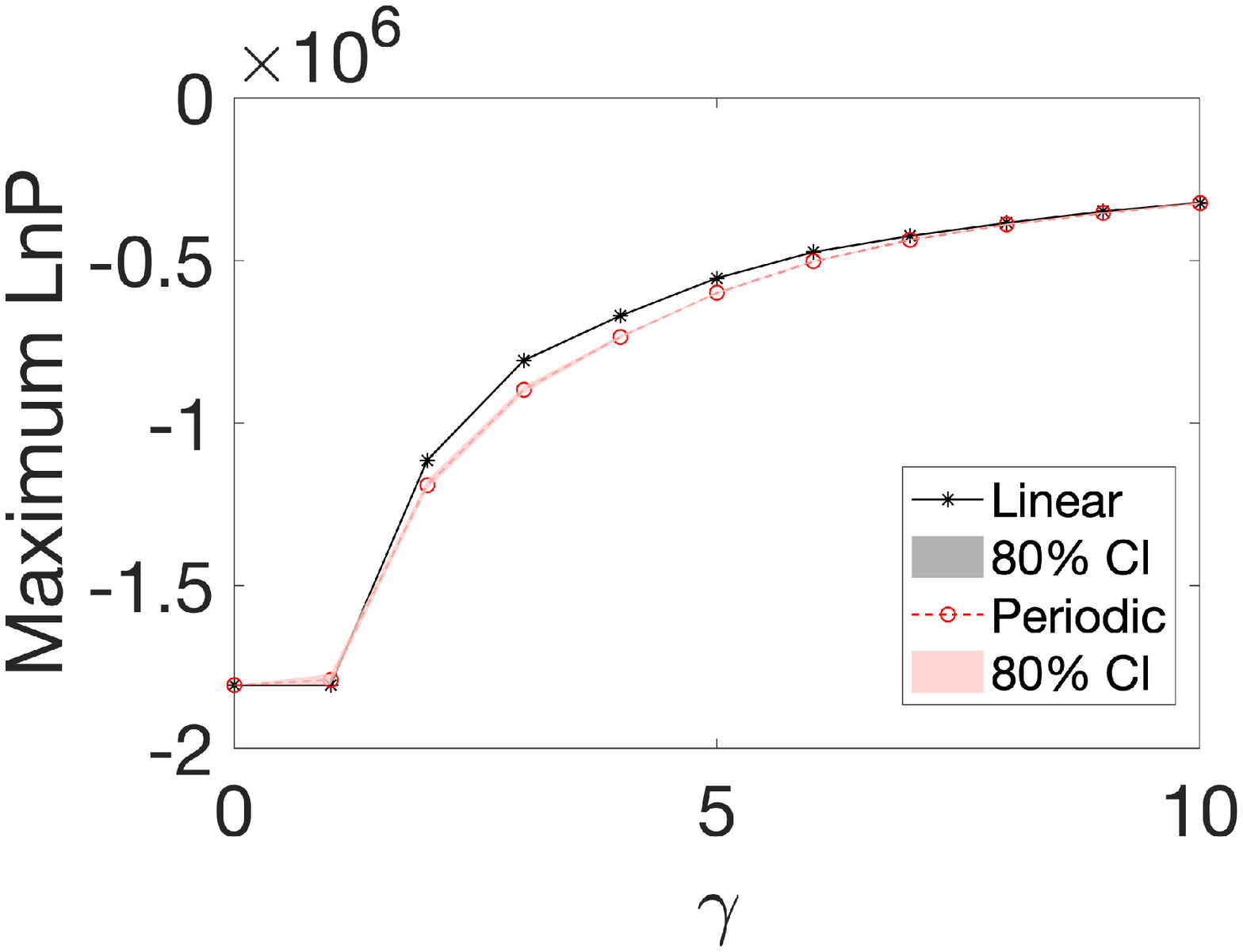}
  \caption{}
  \label{fig:lnP_linear}
\end{subfigure}
\hspace{1cm}
\begin{subfigure}{.4\textwidth}
\centering
  \includegraphics[width=\linewidth]{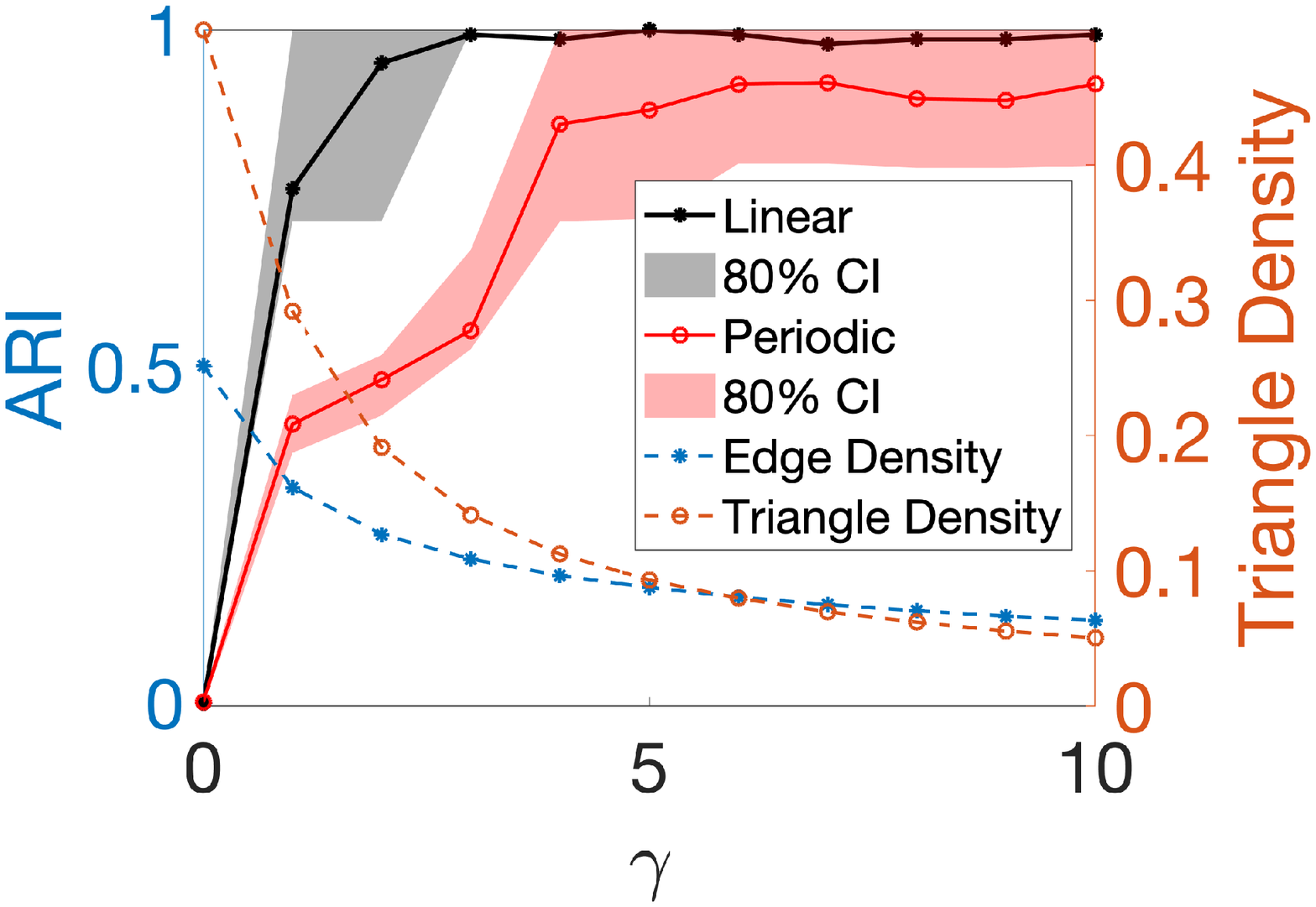}
    \caption{}
    \label{fig:rand_linear}
\end{subfigure}
\bigskip
\begin{subfigure}{.4\textwidth}
  \centering
  \includegraphics[width=\linewidth]{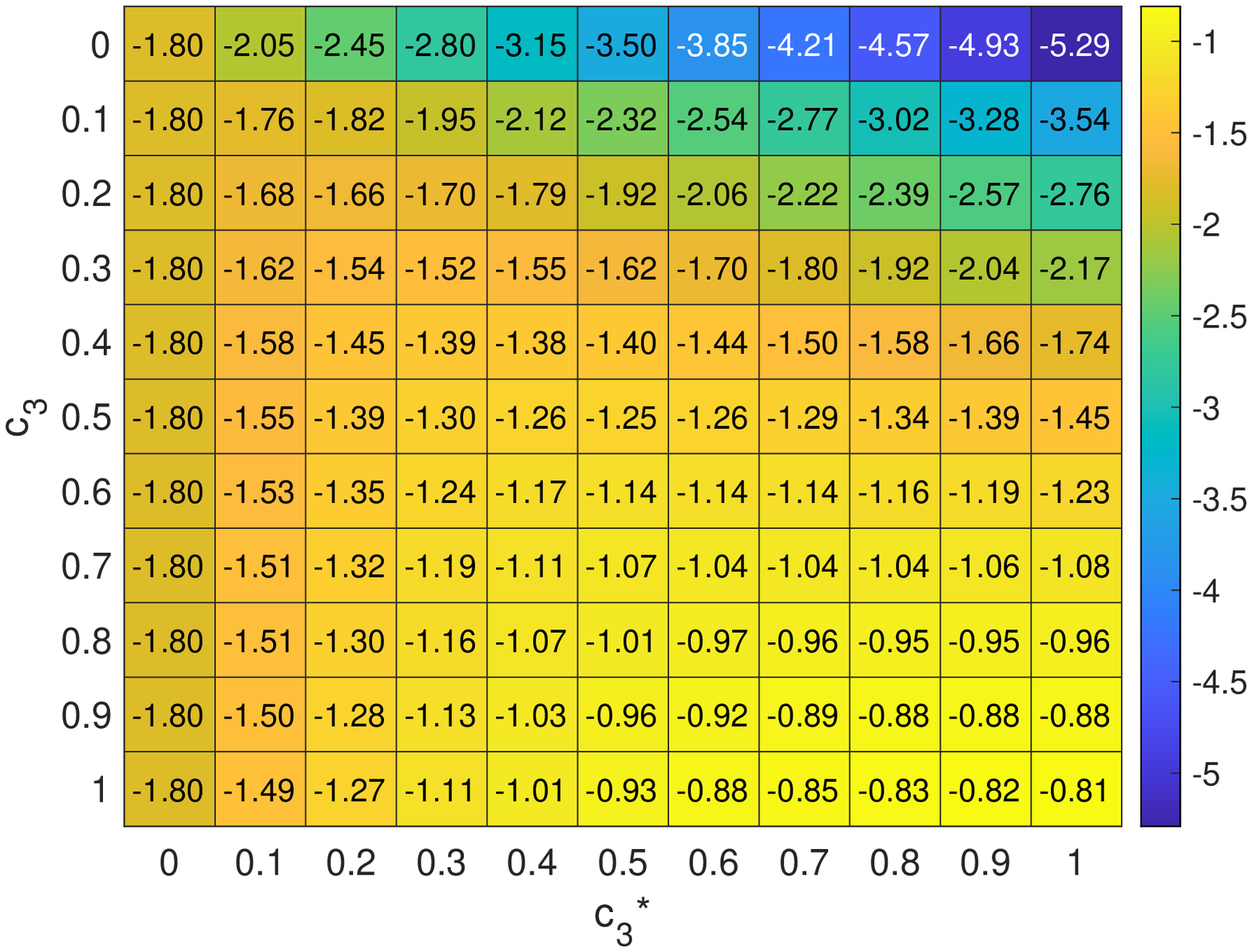}
  \caption{Linear log-likelihood of graph $\times 10^{-6}$}
  \label{fig:lnP_heat_linear}
\end{subfigure}%
\hspace{1cm}
\begin{subfigure}{.4\textwidth}
  \centering
  \includegraphics[width=\linewidth]{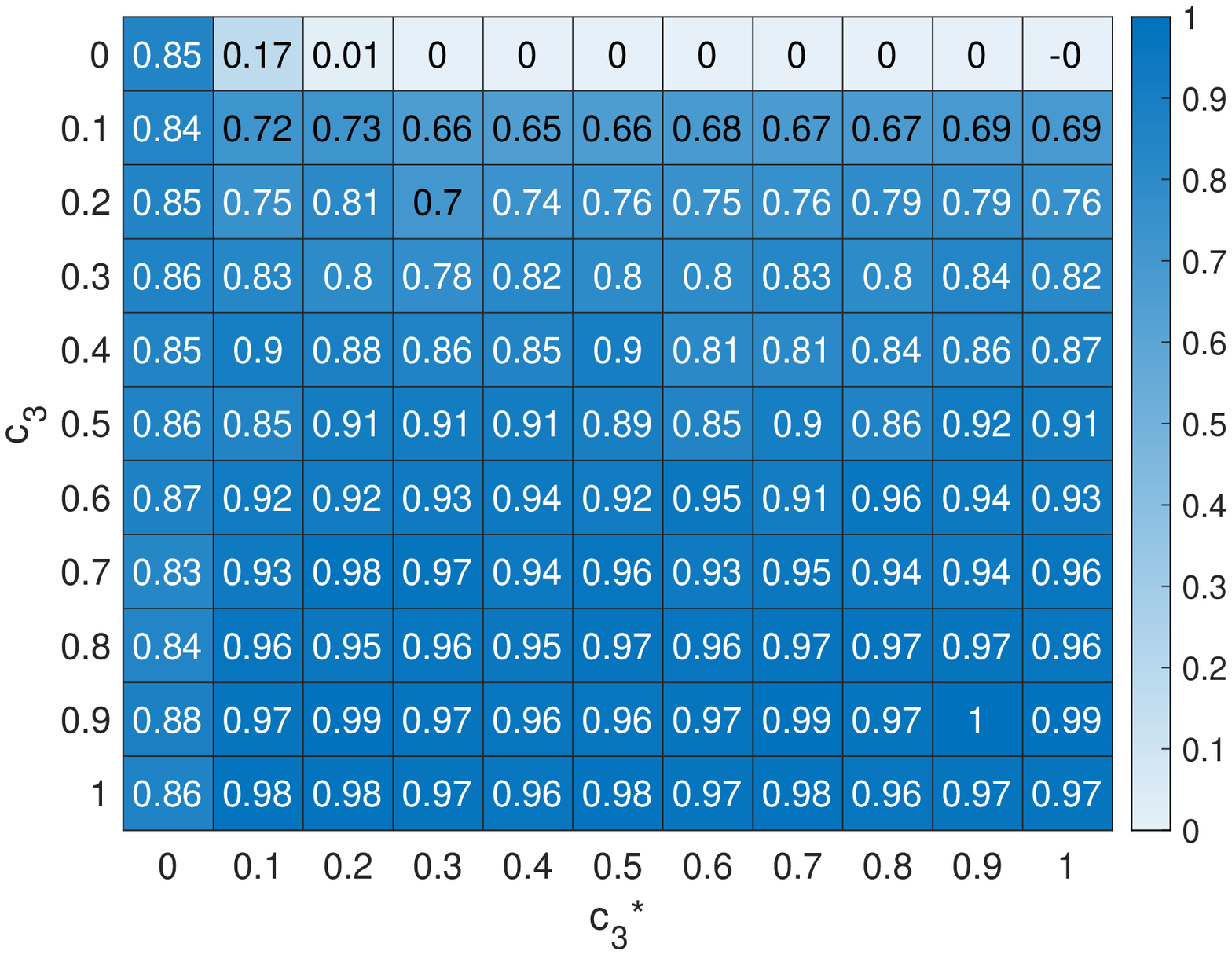}
  \caption{Linear ARI}
  \label{fig:Rand_heat_linear}
\end{subfigure}%
\caption{Model comparison experiments on synthetic linear hypergraphs. (a) Black pixels in the dyadic adjacency matrix ($\gamma_0 = 4$) represent edges, and they reveal 5 clusters in the diagonal blocks. (b) In the triadic adjacency matrix, colors reflect the number of triangles shared between nodes ($\gamma_0 = 4$). (c) The linear embedding achieves higher log-likelihoods than the periodic version for hypergraphs generated with different $\gamma_0$. (d) Adjusted Rand Indices of K-means clustering based on the linear and periodic embedding are plotted against $\gamma_0$. (e) Values in the heatmap represent the maximum likelihoods of the linear model $\times 10^{-6}$ from Algorithm \ref{algo:model_comparison}. The maxima are found along the diagonal when $c_3^* = c_3$. (f) Values represent the Adjusted Rand Indices of K-means clustering based on the linear embedding for different values of  $c_3^*$ and $c_3$. }
\label{fig:model_comparison_linear_cluster}
\end{figure}

\paragraph{Periodic hypergraph with clustered nodes}
To generate hypergraphs with periodic clusters, we use a node embedding based on a vector of angles  $\boldsymbol{\theta} = (\theta_1, \theta_2, ..., \theta_n)^T $ in $[0,2\pi)$, forming $K$ clusters $C_1, C_2, \ldots, C_K$ of size $m$. In particular, we let $\boldsymbol{\theta}_i = \frac{2 \pi (l-1)}{K}+\sigma$ if $i\in C_l$ for $1\leq l \leq K$, where $\sigma \sim \mathrm{unif}(-a,a)$ is the added noise. The hyperedges are generated using model (\ref{eq:unweighted_model}), where the incoherence function is defined in  (\ref{eq:periodic_incoherence}). We choose $a=0.05 \pi$, $c_2 =1$, $c_3 = 1/3$ and vary the decay parameter $\gamma_0$. Examples of the dyadic and triadic adjacency matrices with $\gamma_0 = 1$  are shown in Figure \ref{fig:periodic_W2} and \ref{fig:periodic_W3}.
 
  Using the same approach as in the previous section, 
 we compare the maximum log-likelihood and ARIs assuming linear and periodic structures in Figure \ref{fig:periodic_lnP} and \ref{fig:periodic_rand}. We see that the periodic model achieves a higher maximum, and on average the periodic embedding produces higher ARIs.
 
Heat-maps in Figure~\ref{fig:lnP_heatmap_periodic} and \ref{fig:rand_heatmap_periodic} show 
results for different combinations of $c_3$ and $c^*_3$ for the periodic embedding algorithm. These results were generated in the same way as 
for Figures~\ref{fig:lnP_linear} and \ref{fig:rand_linear}.
Higher maximum likelihoods are achieved near the diagonal where $c_3^* = c_3$, hence the true parameters $c_3$ for the underlying hypergraph could be estimated using the maximum likelihood method. 
As in the previous example, when $c_3 \geq 0.3$, using the triadic edges ($c_3^*$) improves the ARI. When $c_3 < 0.3$, increasing $c_3^*$ leads to an inferior clustering result. However when $c_3 \geq 0.3$, ARI becomes less sensitive to the choice of $c_3^*$ as long as it is positive. 

In summary, 
these tests indicate that the algorithms are able to 
correctly distinguish between 
linear and periodic range-dependency when 
one such structure is present in the data.
We observed that setting $c_3^* >0$ improves 
 the ARI when the triadic edges have a strong structural pattern; that is, when $c_3$ is large.  Moreover,  when the true parameter $c_3$ is unknown we recommend choosing $c_3^*$ based on a maximum likelihood estimation, that is, finding the value $c_3^*$ that returns the largest maxima in Algorithm \ref{algo:model_comparison}. Such a choice also achieve reasonable ARIs in our synthetic examples as shown in the diagonal entries in Figure \ref{fig:Rand_heat_linear} and \ref{fig:rand_heatmap_periodic}. 

\begin{figure}
\centering
\begin{subfigure}{.4\textwidth}
  \centering
  \includegraphics[width=\linewidth]{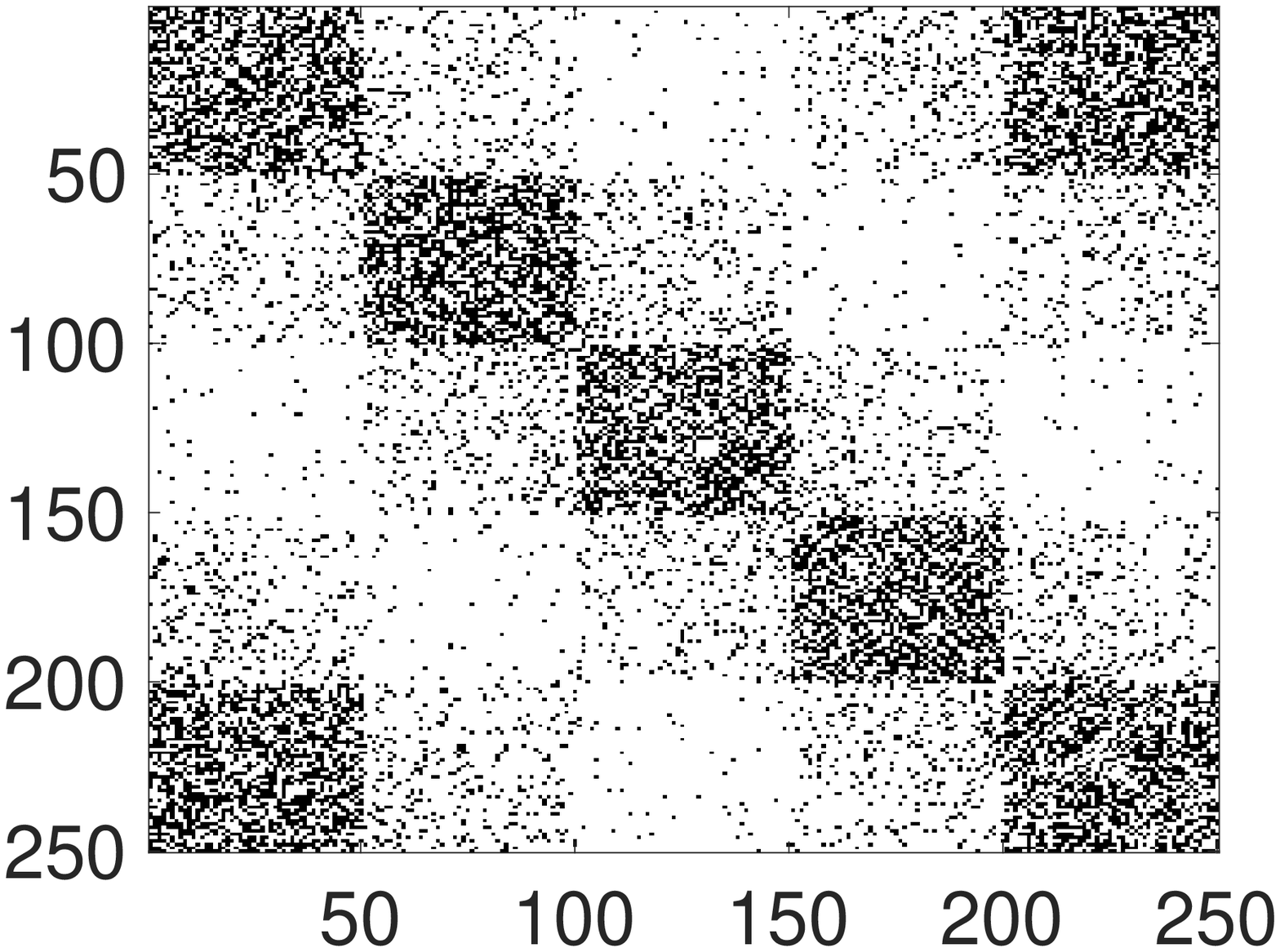}
  \caption{}
  \label{fig:periodic_W2}
\end{subfigure}%
\hspace{1cm}
\begin{subfigure}{.4\textwidth}
  \centering
  \includegraphics[width=\linewidth]{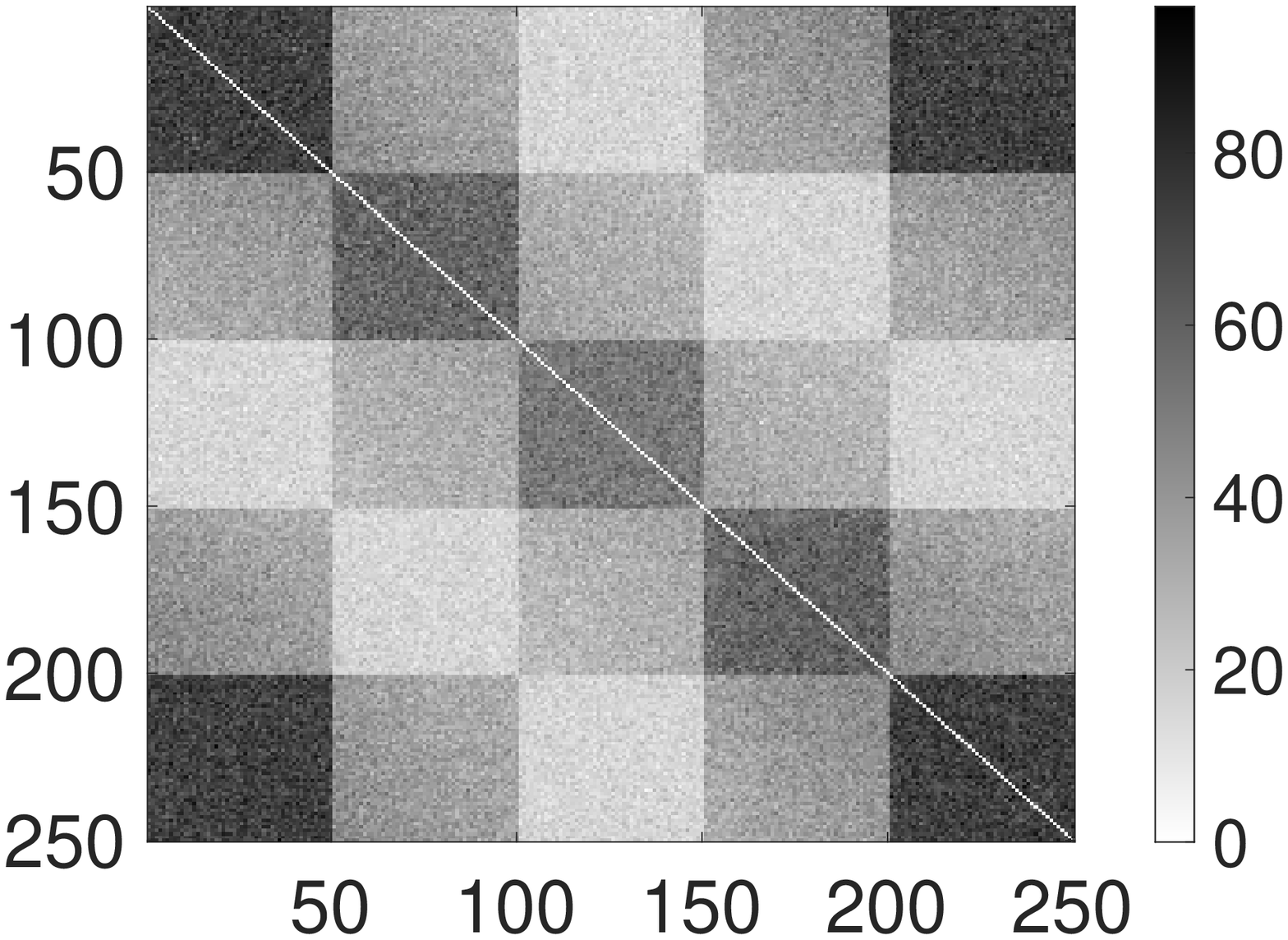}
  \caption{}
  \label{fig:periodic_W3}
\end{subfigure}%
\hspace{1cm}
\begin{subfigure}{.4\textwidth}
  \centering
  \includegraphics[width=\linewidth]{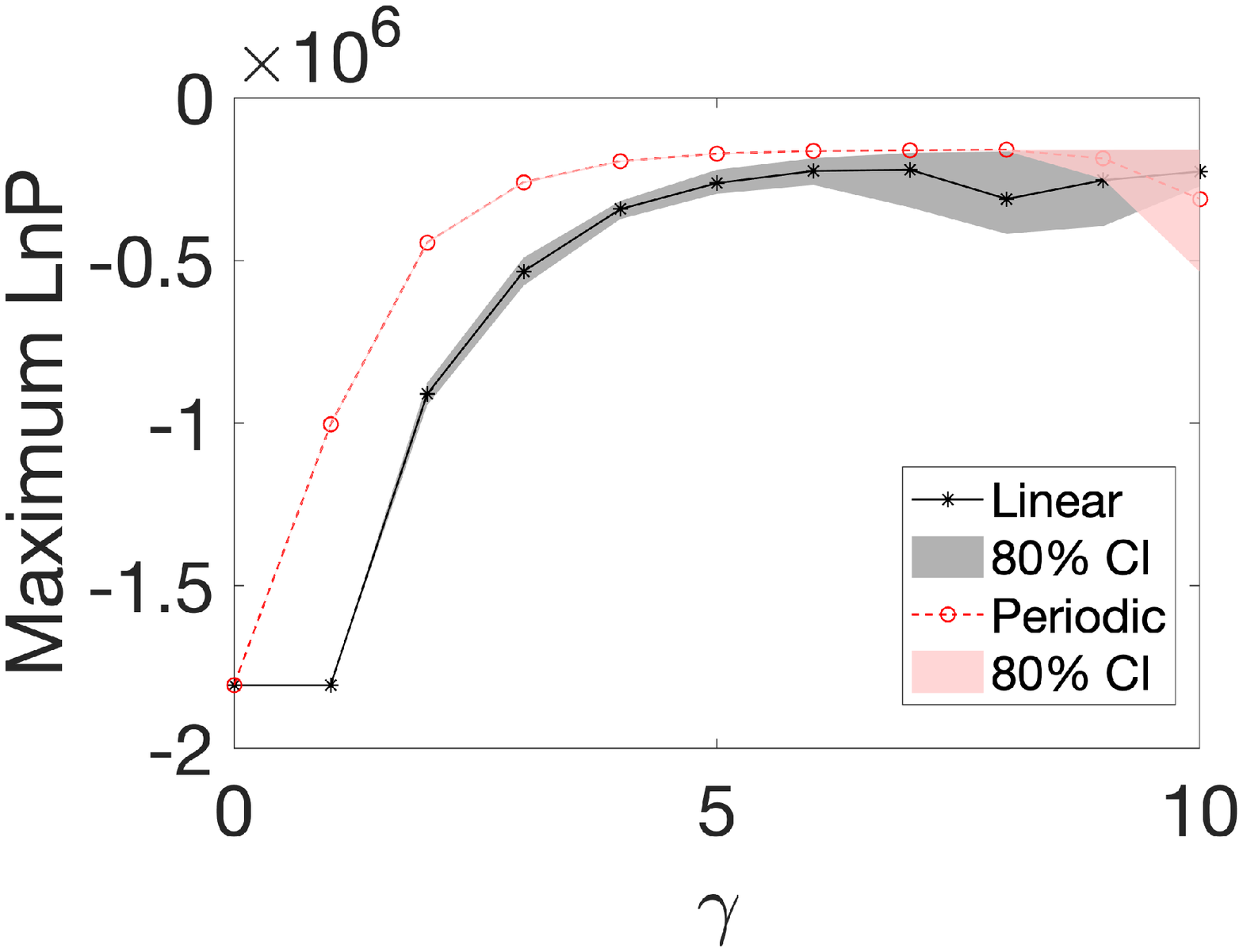}
  \caption{}
  \label{fig:periodic_lnP}
\end{subfigure}
\begin{subfigure}{.4\textwidth}
\centering
  \includegraphics[width=\linewidth]{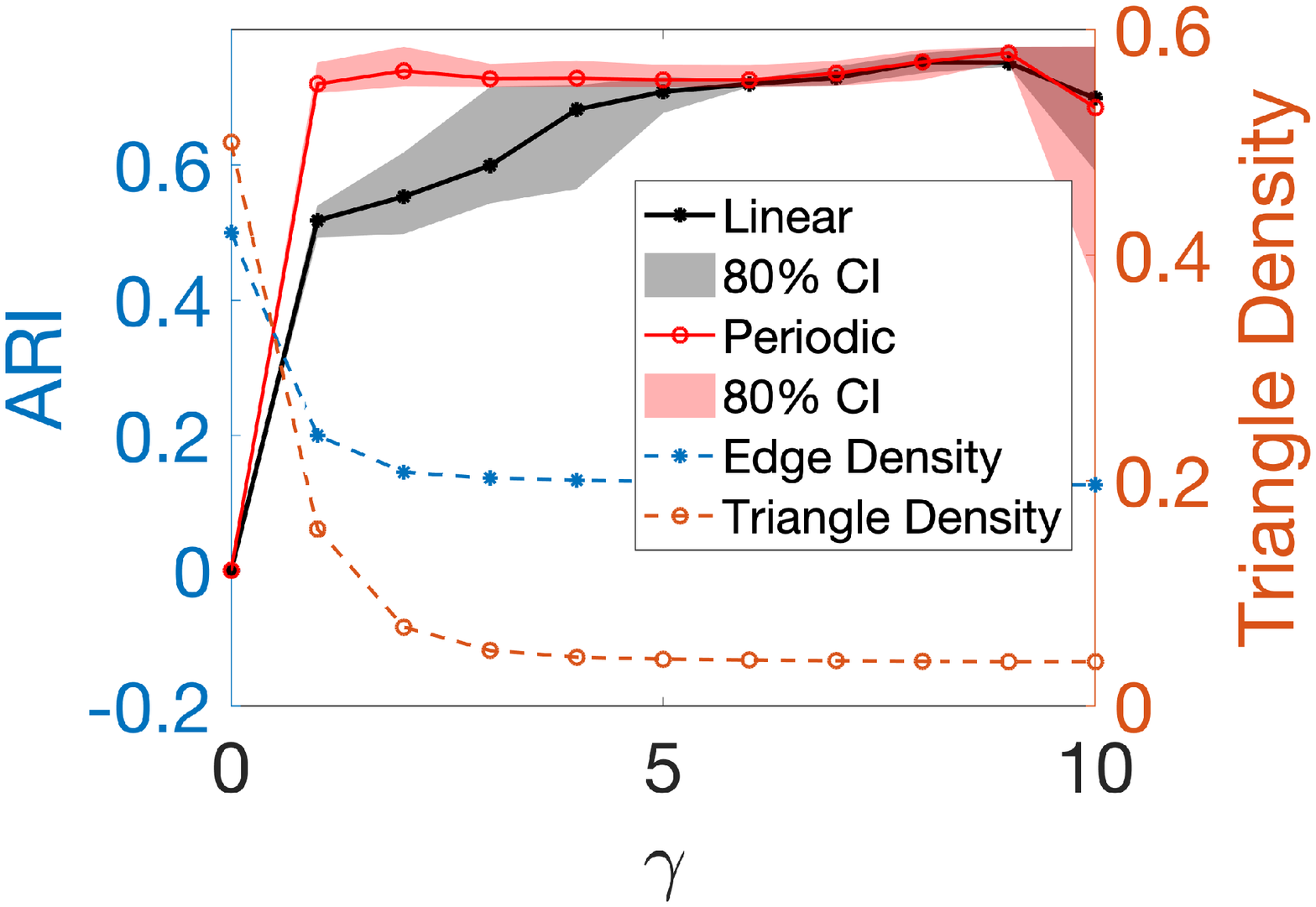}
    \caption{}
    \label{fig:periodic_rand}
\end{subfigure}
\bigskip
\begin{subfigure}{.4\textwidth}
  \centering
  \includegraphics[width=\linewidth]{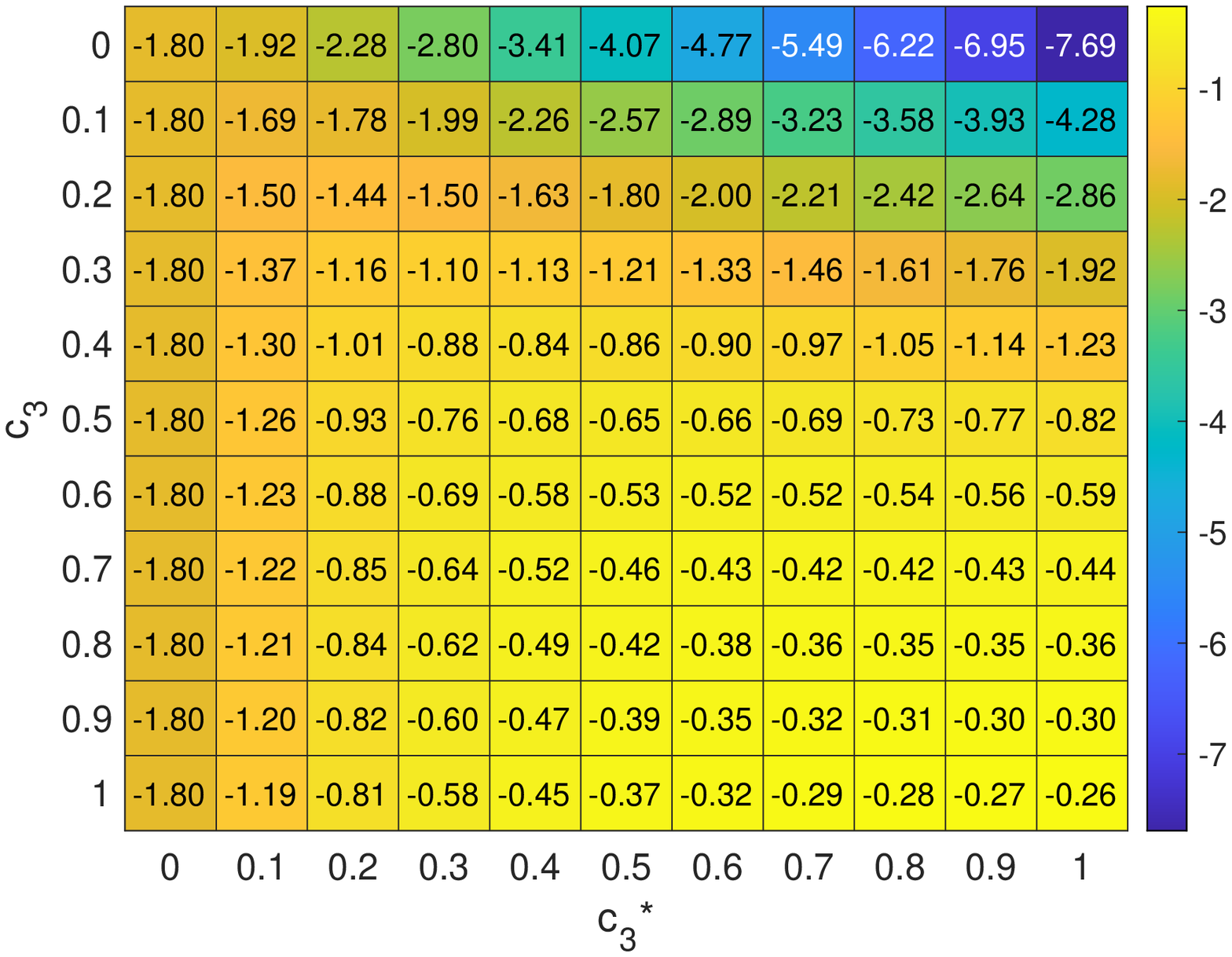}
  \caption{Log-likelihood of periodic model $\times 10^{-6}$}
  \label{fig:lnP_heatmap_periodic}
\end{subfigure}%
\hspace{1cm}
\begin{subfigure}{.4\textwidth}
  \centering
  \includegraphics[width=\linewidth]{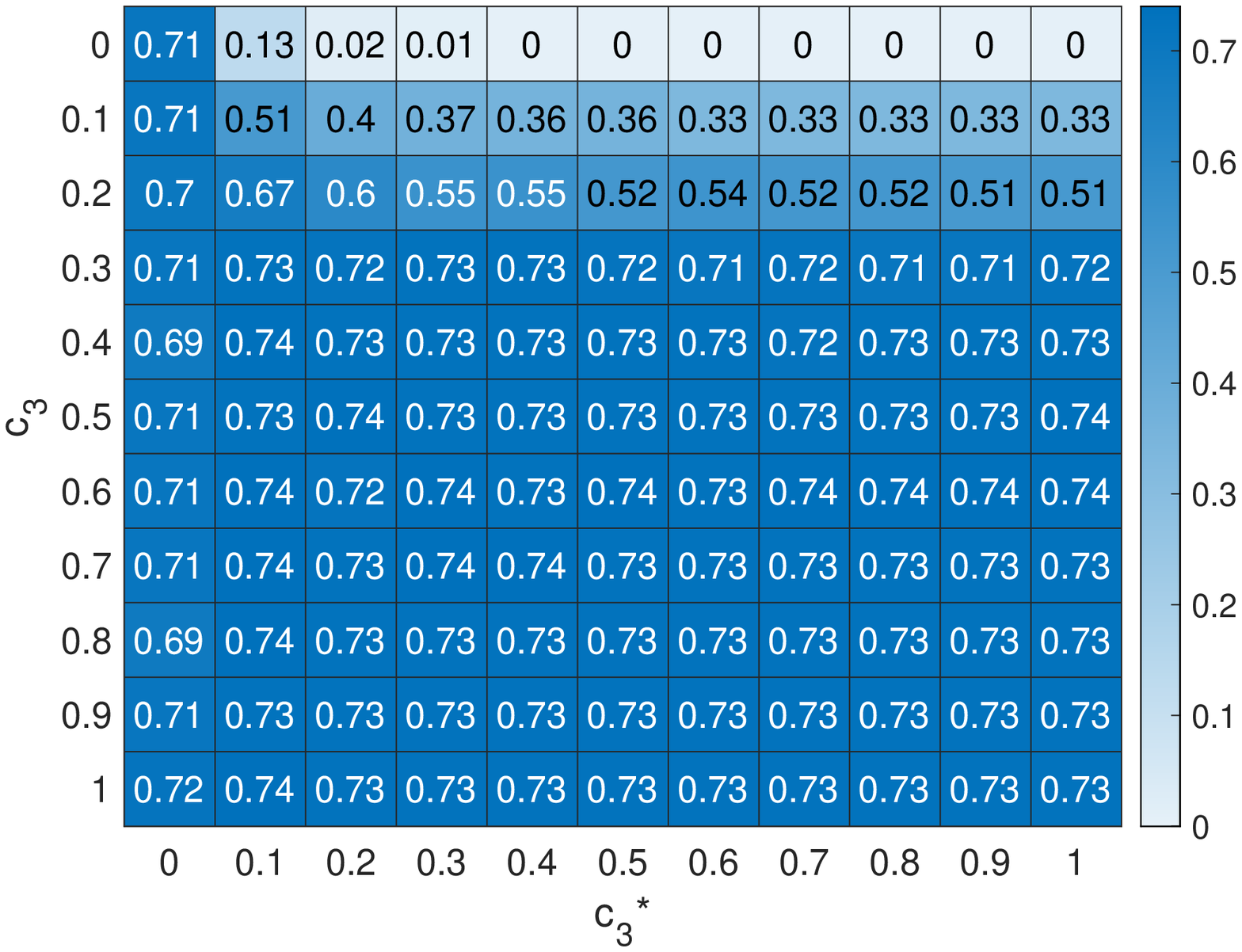}
  \caption{Periodic ARI}
  \label{fig:rand_heatmap_periodic}
\end{subfigure}%
\caption{Model comparison experiments on synthetic periodic hypergraphs. (a) The bottom-left and top-right blocks in the dyadic adjacency matrix ($\gamma_0 = 1$) reveal a periodic clustered structure. (b) Similar periodic clustered pattern is shown in the triadic adjacency matrix  ($\gamma_0 = 1$). (c) The periodic embedding achieves higher log-likelihoods than the linear embedding for different values of $\gamma_0$. (d) Adjusted Rand Indices of K-means clustering based on the linear and periodic embedding are plotted against $\gamma_0$. (e) Values in the heatmap indicate the maximum likelihoods of the periodic model $\times 10^{-6}$ from Algorithm \ref{algo:model_comparison} and the maxima lie on the diagonal where $c_3^* = c_3$. (f) Values represent the Adjusted Rand Indices of K-means clustering based on the periodic embedding for different values of  $c_3^*$ and $c_3$.}
\label{fig:model_comparison_periodic_cluster}
\end{figure}

\subsubsection*{Real Hypergraphs}
\label{subsec:real_hypergraphs}

\paragraph{High School Contact Data}
The high school contact data from  \cite{mastrandrea2015contact} records the frequency of student interaction. Students are represented as nodes, and contacts between two or three students are registered as dyadic or triadic edges. We retrieved the hypergraph from \cite{Chodrow21} containing 327 nodes, and only studied its  dyadic and triadic edges considering the computational complexity. We construct the hypergraph Laplacian $L = c_2^* L^{[2]} + c_3^* L^{[3]}$ and perform linear and periodic spectral embedding. For the linear embedding we map nodes into 3-dimensional Euclidean space using the eigenvectors corresponding to the three smallest eigenvalues that are larger than $0.01$. We make this choice because the eigenvector associated with the smallest non-zero eigenvalue has only a few non-zero entries and leads to trivial clusters. We fix $c_2^* = 1$ and vary $c_3^*$ since only the relative weight $c_3^*/c_2^*$ matters in node embedding. 

The  the maximum likelihoods and ARIs evaluated using various $c_3^*$ are shown in the left column in Figure~\ref{fig:model_comparison_real}. The true clusters are defined by the classes the students came from. Overall the periodic embedding achieves higher likelihoods and ARIs despite the linear embedding involving more parameters. Since linear clusters tend to have more marginalized groups that are far from other clusters, our results may suggest a lack of marginalisation driven by class membership.

We note that setting $c^*_3=0$ causes the algorithm to ignore triangles, and hence to reduce to classical spectral clustering. For the linear algorithm, we see that incorporating triadic edges by using a positive $c^*_3$ can improve the ARI by up to around $0.09$. We note that in \cite{Chodrow21}, modularity maximization-based clustering achieved ARI=1 on the same data. However, those methods have more parameters, which makes the ARI not directly comparable.

%\begin{figure}
%\centering
%\begin{subfigure}{.4\textwidth}
%  \centering
%  \includegraphics[width=\linewidth]{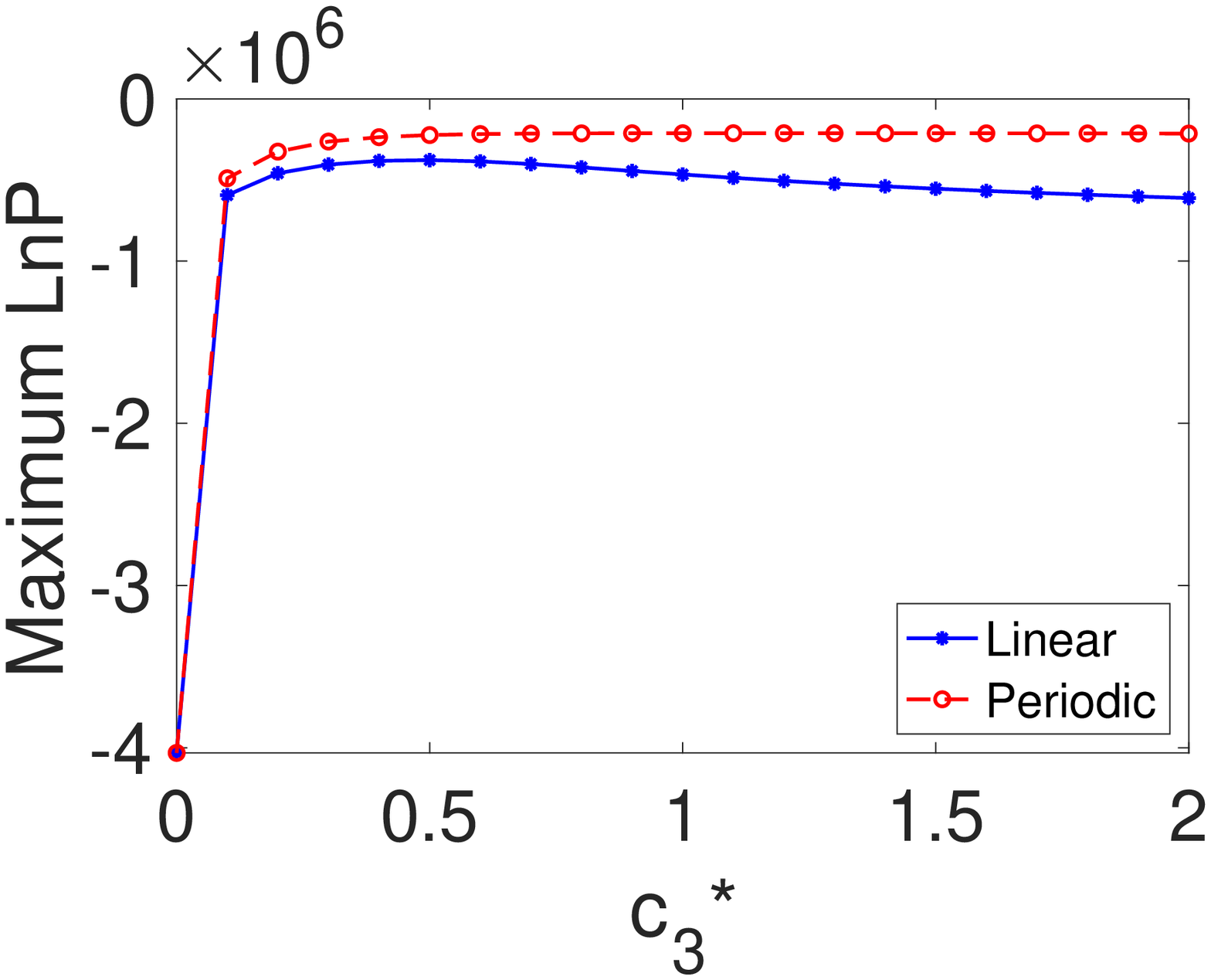}
%  \caption{}
%  \label{fig:lnP_highschool}
%\end{subfigure}
%\hspace{1.5cm}
%\begin{subfigure}{.4\textwidth}
%  \centering
%  \includegraphics[width=\linewidth]{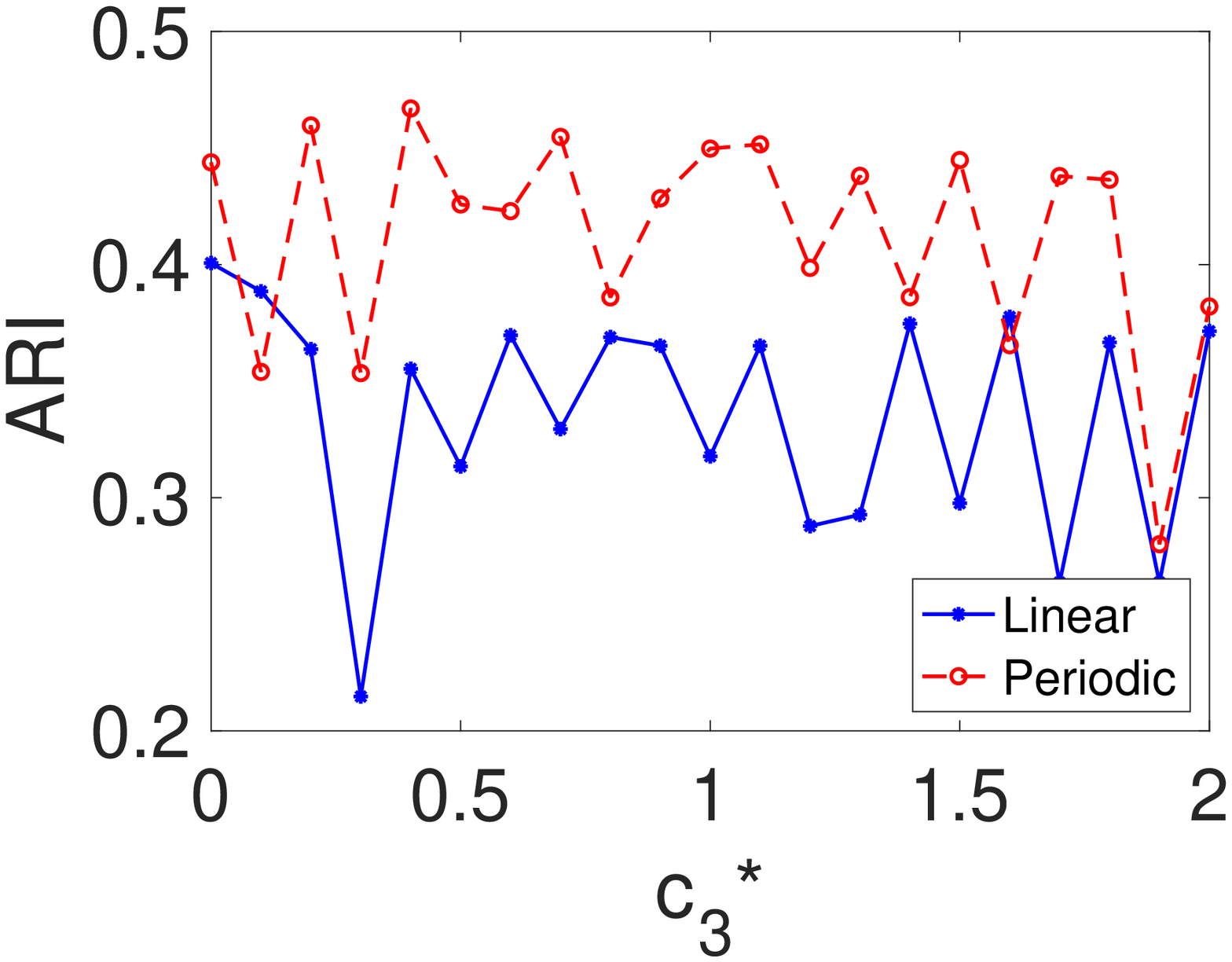}
%  \caption{}
%  \label{fig:rand_highschool}
%\end{subfigure}

%\caption{Model comparison experiments on the high school contact data. (a) Maximum log-likelihoods of the linear and periodic embedding are plotted for different %values of $c_3^*$ where pentagrams represent maxima. (b) Adjusted Rand Index (ARI) of K-means clustering based on the linear and periodic embedding are plotted for %various $c_3^*$.}
%\label{fig:highschool}
%\end{figure}

\paragraph{Primary School Contact Data}
The primary school contact hypergraph \cite{Chodrow21} is constructed from the contact pattern between primary students from 10 classes \cite{Stehl-2011-contact}. Nodes represent students or teachers, and hyperedges represent their physical contact. Each node is labelled by the class of the students or as a teacher. The hypergraph contains 242 nodes and 11 classes of labels. We extracted the dyadic and triadic edges from the hypergraph and performed likelihood comparison and clustering with four eigenvectors associated with the four smallest eigenvalues that are greater than 0.01. The middle column in Figure~\ref{fig:model_comparison_real} suggests the periodic embedding achieves the maximum likelihood at $c_3^* = 0.7$ and overall performs better than the linear embedding in the clustering task.  These results may be related to the existence of a teacher group that connects with all student groups. When we arrange dyadic and triadic adjacency matrices by node classes, these connections will appear as off-diagonal entries. As we have shown in Figure \ref{fig:periodic_W2} and \ref{fig:periodic_W3}, the periodic model tends to produce more off-diagonal connections than the linear model.

%\begin{figure}
%\centering
%\begin{subfigure}{.4\textwidth}
%  \centering
%  \includegraphics[width=\linewidth]{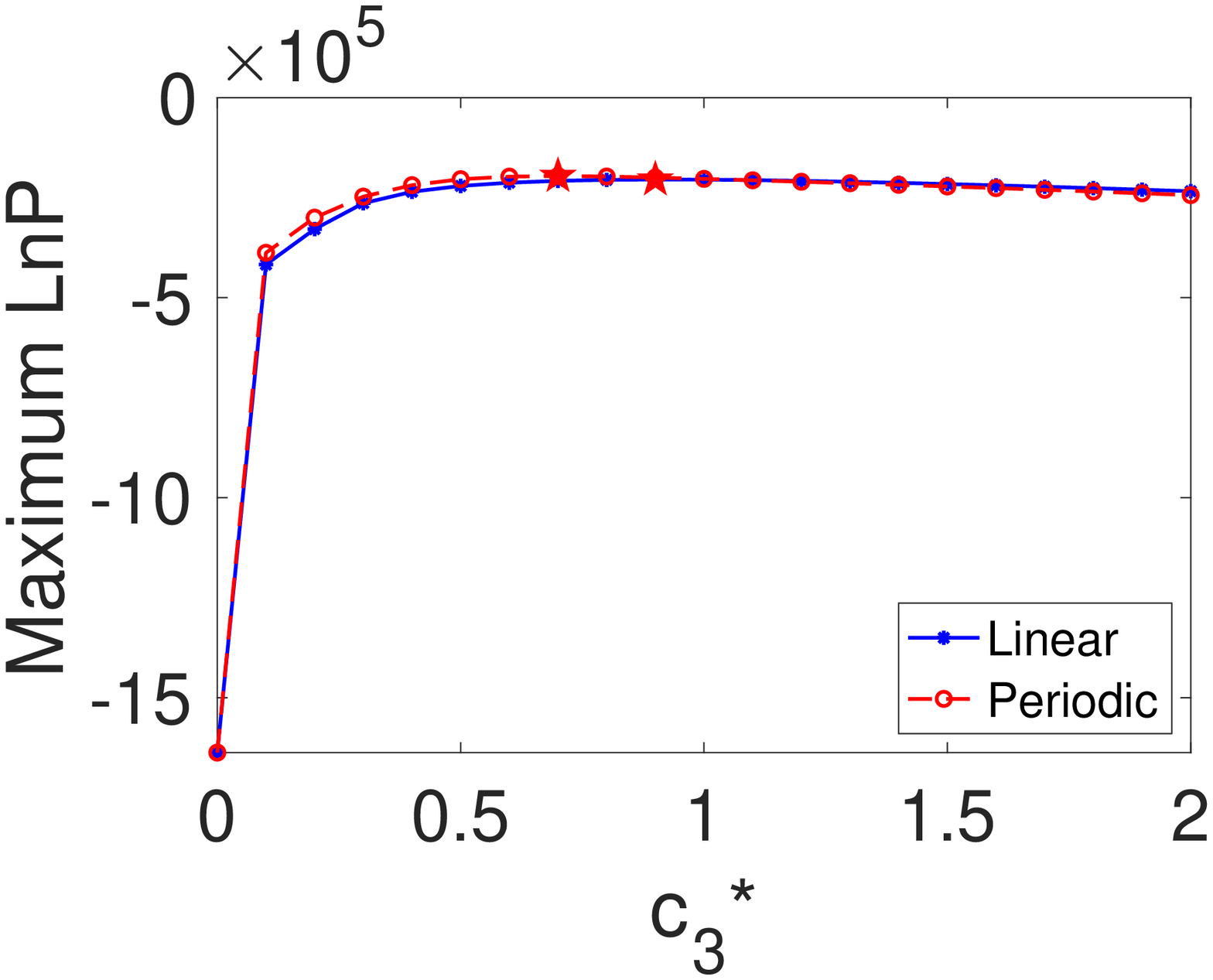}
%  \caption{}
%  \label{fig:lnP_primaryschool}
%\end{subfigure}
%\hspace{1.5cm}
%\begin{subfigure}{.4\textwidth}
%  \centering
%  \includegraphics[width=\linewidth]{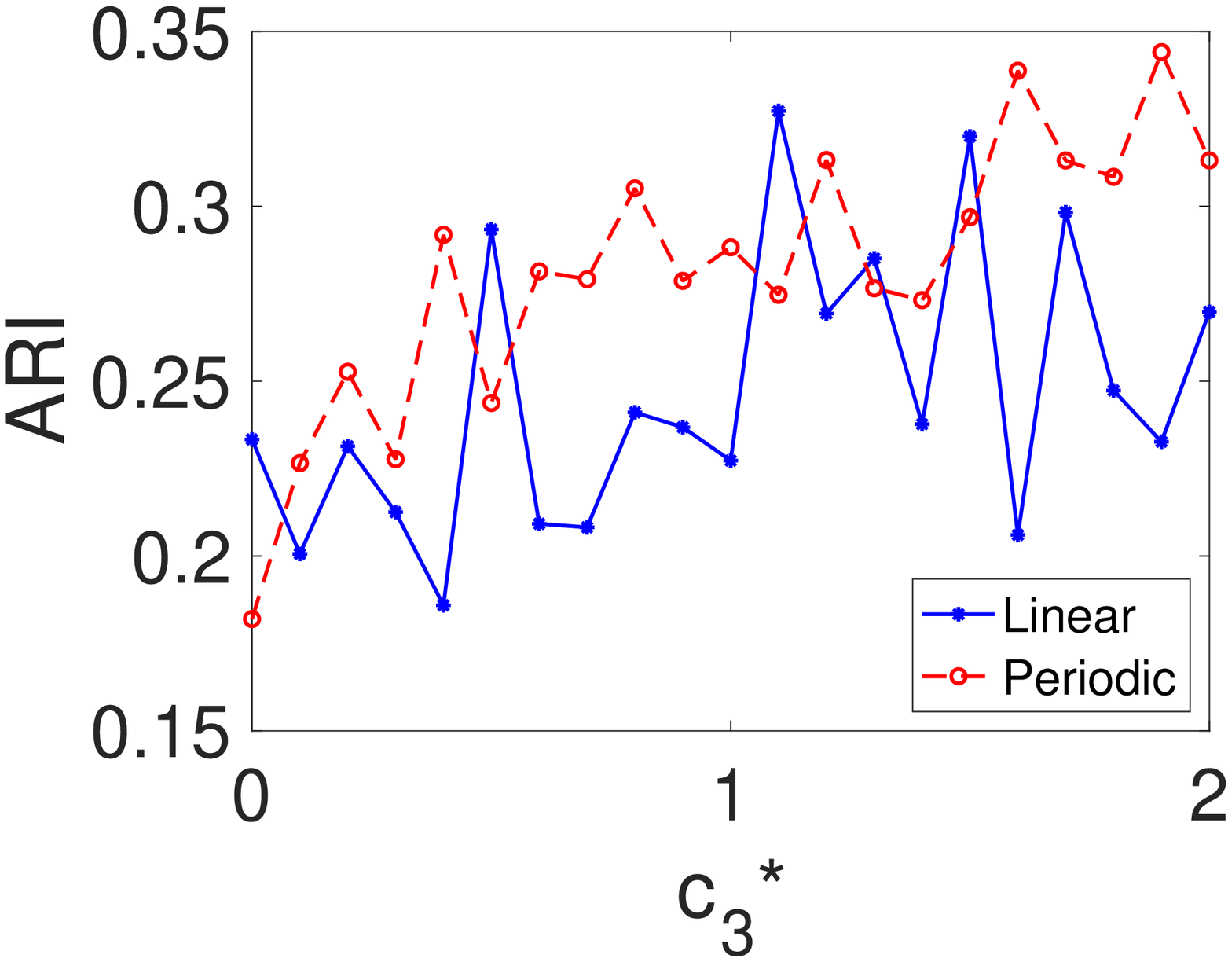}
%  \caption{}
%  \label{fig:rand_primaryschool}
%\end{subfigure}%

%\caption{Model comparison experiments on the primary school contact data. (a) Maximum log-likelihoods of the linear and periodic embedding are plotted for %different values of $c_3^*$ where pentagrams represent maxima. (b) Adjusted Rand Index (ARI) of K-means clustering based on the linear and periodic embedding are %plotted for various $c_3^*$.}
%\label{fig:primaryschool}
%\end{figure}

\paragraph{Senate Bills Data}
In the senate bills hypergraph\cite{Fowler-2006-connecting, Fowler-2006-cosponsorship, Chodrow21}, nodes are US Congresspersons and hyperedges are the sponsor and co-sponsors of Senate bills. There are in total 294 nodes, and each node is labelled as either Democrat or Republican. We performed likelihood comparison and clustering with only the dyadic and triadic edges. Since the node degree distribution is highly inhomogeneous, we observe many trivial eigenvectors that are close to indicator functions. To address this issue we trimmed off the top and bottom 2\% nodes by node degree, and use the eigenvector associated with the smallest eigenvalue that is greater than 0.01. The linear and periodic models have similar maximum likelihoods and clustering ARIs, as shown in the right column in Figure~\ref{fig:model_comparison_real}. In contrast with previous examples, there are only two clusters present in this data set. Hence the difference between the periodic and linear models, which could be reflected in the connection (or disconnection) pattern between the first and the last group, is less prominent.

\begin{figure}
\centering
\begin{subfigure}{.3\textwidth}
  \centering
  \includegraphics[width=\linewidth]{max_lnPhighschool_c2_1.eps}
  \caption*{}
\end{subfigure}
\hspace{0.3cm}
\begin{subfigure}{.3\textwidth}
  \centering
  \includegraphics[width=\linewidth]{max_lnPprimaryschool_c2_1.eps}
  \caption*{}
\end{subfigure}
\hspace{0.3cm}
\begin{subfigure}{.3\textwidth}
  \centering
  \includegraphics[width=\linewidth]{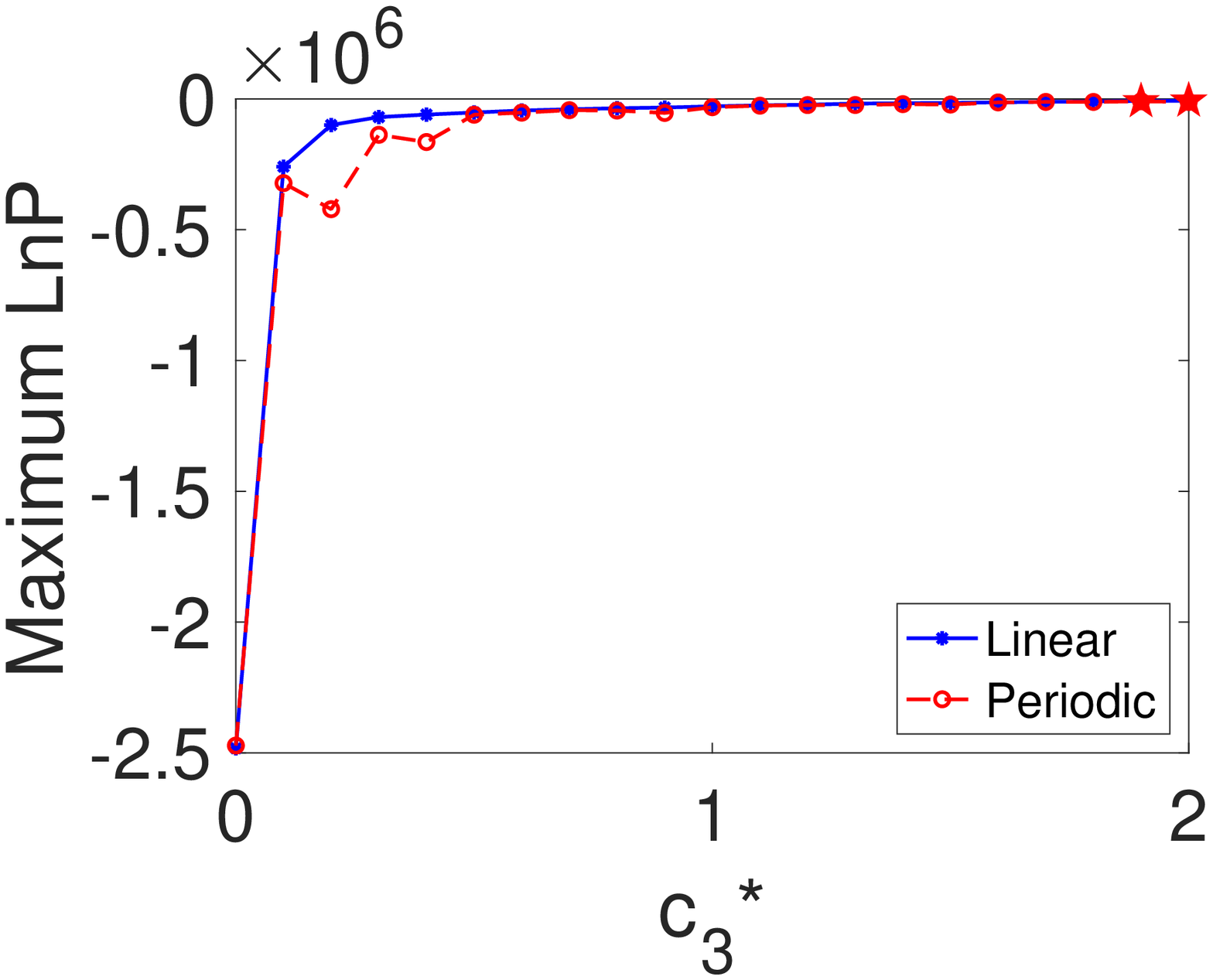}
  \caption*{}
\end{subfigure}
\begin{subfigure}{.3\textwidth}
  \centering
  \includegraphics[width=\linewidth]{rand_linear_highschool.eps}
  \caption*{High School Contact}
\end{subfigure}
\hspace{0.3cm}
\begin{subfigure}{.3\textwidth}
  \centering
  \includegraphics[width=\linewidth]{rand_linear_primaryschool.eps}
  \caption*{Primary School Contact}
\end{subfigure}%
\hspace{0.3cm}
\begin{subfigure}{.3\textwidth}
  \centering
  \includegraphics[width=\linewidth]{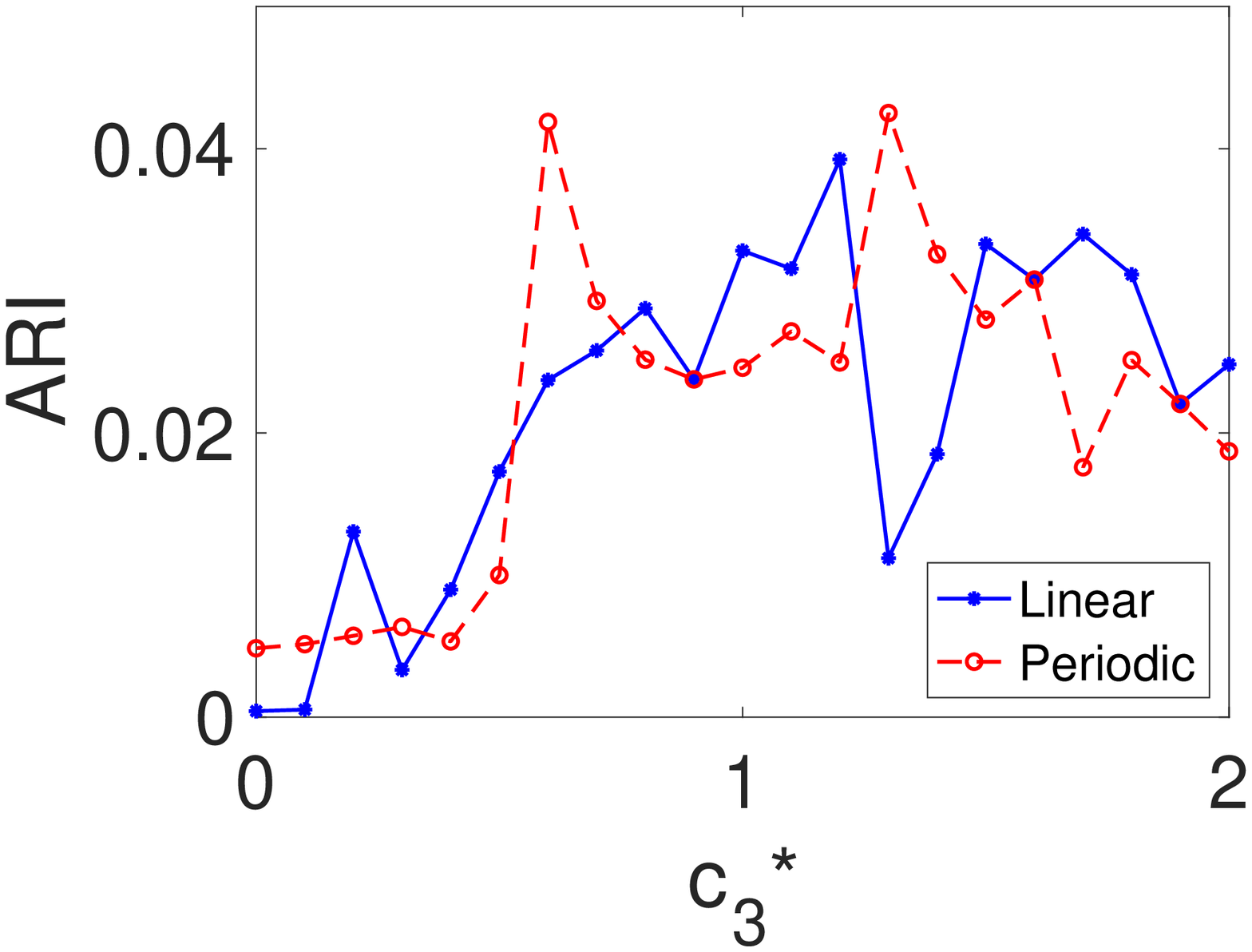}
  \caption*{Senate Bills}
\end{subfigure}%

\caption{Model comparison experiments on the high school contact data (left), the primary school contact data (middle), and the senate bill data (right). Top row shows the maximum log-likelihoods of the linear and periodic embedding for different values of $c_3^*$ where pentagrams indicate maxima. Bottom row plots the Adjusted Rand Indices of K-means clustering based on the linear and periodic embedding for various $c_3^*$.}
\label{fig:model_comparison_real}
\end{figure}

\subsection*{Hyperedge Prediction}

Once the node embeddings are estimated from the spectral algorithms, the probability of hyperedges may be computed from the proposed models. The hyperedge probability can naturally serve as a score for hyperedge prediction.  We implement and test such triadic edge prediction on timestamped high school contact data \cite{benson2018simplicial, mastrandrea2015contact} , primary school contact data \cite{Stehl-2011-contact, benson2018simplicial}, and synthetic linear hypergraphs. The results will be compared against approaches based on average-scores proposed in \cite{benson2018simplicial}. Other hyperedge prediction methods include feature-based prediction \cite{yoon2020much}, model-based prediction\cite{scholtes2017network}, and machine learning-based prediction \cite{yadati2020nhp}. 

For high school and primary school contact data, we used three and four eigenvectors respectively corresponding to the smallest eigenvalues that are greater than $0.01$ to be consistent with the previous section, and only consider dyadic and triadic edges.  The hyperedges are sorted by time stamps and split into training and testing data. For example, an  $80:20$ training/testing splitting ratio means we use the first $80\%$ of the hyperedges to train the model and the last $20\%$ to test the predictions. When the training ratio is low, the subgraph for training may be disconnected thus violating Assumption \ref{assumption:connected}. Therefore, we only consider nodes in the largest connected component of the graph associated with the binarized version of $L$ of the training subgraph, and test the prediction on the same set of nodes. Note that in real data, the parameters $\gamma_0$, $c_2$ and $c_3$ for the hypergraph model are unknown.  We fix $c_2^*=1$ and choose $c_3^*$ and $\gamma^*$ using a maximum likelihood estimate through a grid search on the training data. 

On the training set, we assign scores to each triplet using five methods: a random score as baseline, hyperedge probability from the linear model, arithmetic mean, harmonic mean, and geometric mean from \cite{benson2018simplicial}. On the test set, we measure the prediction performance with the area under precision-recall curve (AUC-PR) \cite{davis2006relationship}. A Precision-Recall (PR) curve traces the Precision = True Positive/ (True Positive + False Positive) and Recall = True Positive / (True Positive + False Negative) for different thresholds. The AUC-PR is a measure that balances both Precision and Recall where 1 means perfect prediction at any threshold. Setting  $c_3^* = 0$ will assign probability of $0.5$ to all triplets, which is equivalent to the random score approach if we break ties randomly.

%Note that the geometric and harmonic means will be zero if at least one node pair among the triplet has never appeared %in the same triadic edge in the training set. Furthermore, the smallest non-zero geometric and harmonic mean is $1$ when %all three node pairs in the triplet have been linked by exactly one triadic edge. As a result, there is no data point %between thresholds $1$ and $0$ on the PR curve. By default, the AUC-PR includes the area under the line segments between %the two points, which in this context will tend to overestimate the performance. To get a more reasonable AUC-PR, we %randomly break ties when the score equals 0, by assigning them a new score uniformly distributed from 0 to 1,  to smooth %the tail of the PR curve. 

On the high-school contact data shown in 
Table~\ref{tbl:triangle_prediction}, the harmonic and geometric mean attain the highest AUC-PR for large amounts of training data, see, for example the $80:20$ data split; while the linear model predictions achieve the best results for small amounts of training data, as seen in the $20:80$ data split. This could be because when training data is insufficient, there are more unobserved "missing" dyadic and triadic edges. In this case, the node embedding algorithm can infer node proximity based on common neighbours. In other words, it can place nodes with common neighbours nearby even if they haven't been directly linked before.  On the other hand, the geometric and harmonic mean will assign a score of zero to a triplet if none of the nodes has been connected previously, and therefore will predict no triadic edges.

We also test the triadic edge prediction on the synthetic linear hypergraphs, generated in the manner described in subsection~\ref{subsec:synthetic_linear}, with $K = 4$, $m=60$, $\gamma = 10$, $c_2 = 1$, and $c_3 = 0.3$, such that the clustered pattern resembles the high-school contact data. We consider three eigenvectors associated with the smallest eigenvalues that are greater than 0.01 for the synthetic linear model. Since defining a periodic model with more than one eigenvector is beyond the scope of this work, we only test the linear hypergraphs. We randomly select a portion of the hyperedges as the training set, while ensuring the sampled hypergraph is connected, and test the performance on the rest of the hyperedges. The AUC-PR averaged over 20 random hypergraphs is shown in Table~\ref{tbl:triangle_prediction}. We observe that the linear model outperforms random score and average scores for various training data sizes.

\begin{table}
\begin{center}
\begin{tabular}{||c c c c c c||} 
 \hline
 Data (Train:Test) & Random & Linear & Arith Mean & Geo Mean & Harm Mean \\ 
 \hline
 Highschool (80:20) & 5.3e-5 &  9.3e-4   &  9.4e-4 & \textbf{6.2e-3} & 6.0e-3  \\ 
 \hline
  Highschool (60:40) & 1.2e-4 &  1.2e-3   & 2.1e-3 & \textbf{1.1e-2} & \textbf{1.1e-2}  \\
 \hline
  Highschool (20:80) & 2.6e-4 &  \textbf{9.8e-3}  & 3.9e-3 & 7.6e-3 & 7.4e-3 \\
  \hline
 Primary School (80:20) & 3.2e-4 &  8.4e-3  &  3.3e-3 & 1.6e-2 & \textbf{1.7e-2}  \\ 
 \hline
  Primary School (60:40) & 1.1e-3 &  1.2e-2  & 9.9e-3 & \textbf{2.2e-2} & \textbf{2.2e-2} \\
 \hline
  Primary School (20:80) &  1.6e-3 & \textbf{3.0e-2}   & 1.6e-2 & 2.2e-2 & 2.1e-2 \\
   \hline
  Linear Model (80:20) & 6.4e-3 & \textbf{1.6e-1}  & 4.1e-2 & 7.8e-2 & 7.7e-2 \\ 
 \hline
  Linear Model (60:40) &  1.3e-2 & \textbf{2.8e-1}  &  8.5e-2 & 1.8e-1 & 1.8e-1  \\ 
 \hline
  Linear Model (20:80) &  3.2e-2 & \textbf{ 3.0e-1}  &  1.1e-1 & 1.4e-1 & 1.3e-1  \\ 

 \hline 

\end{tabular}
\end{center}
\caption{AUC-PR for triangle prediction on high-school contact data and synthetic hypergraphs from the linear model. Highest values are indicated in bold.}
\label{tbl:triangle_prediction}
\end{table}

\section{Conclusion}
\label{sec:con}
In this work we have developed new random models and embedding algorithms for 
hypergraphs, and investigated their equivalence. In particular, we focused on two spectral embedding algorithms customized for hypergraphs,
 which aim to reveal linear and periodic  structures, respectively. We also described random hypergraph models associated with these algorithms, which allow us to quantify the  relative strength of linear and periodic structures based on maximum likelihood. We demonstrated the model comparison approach on synthetic linear and periodic hypergraphs, showing that the results are consistent with the generating mechanism.
 When applied to high school and primary school contact hypergraphs, the model comparison suggests the periodic structure is more prominent. On this data set we also showed that the ``spectral embedding plus random hypergraph'' approach gives a useful strategy for predicting new hyperedges.
 
 In future work, it would be interesting to investigate how these linear and periodic hypergraph models compare with other versions that use 
 alternative assumptions, including those based on core-periphery \cite{tudisco2022core} and stochastic block model \cite{peixoto2022ordered,Chodrow21} structures.

\section*{Acknowledgements}
X.G. acknowledges support of MAC-MIGS CDT under EPSRC grant EP/S023291/1. D.J.H. was supported by EPSRC grant EP/P020720/1.
 K.C.Z. was supported by the Leverhulme Trust grant  2020-310 and EPSRC grant EP/V006177/1. 

\section*{Data, code and materials}
This research made use of public domain data that is available over the internet, as indicated in the text. Code for the experiments is available at \url{https://github.com/OpalGX/hypergraph_model}.

\section*{Competing interests}
The authors declare that they have no conflicts of interest. 

\section*{Authors' contributions}
X.G. carried out the numerical experiments and drafted the manuscript. All authors contributed to the theoretical understanding, development of research, and the writing of the manuscript. All authors gave final approval for publication.

\bibliographystyle{vancouver}
\bibliography{references}

\end{document}